\newtheorem{theorem}{Theorem}
\crefname{theorem}{theorem}{Theorems}
\Crefname{Theorem}{Theorem}{Theorems}
\newaliascnt{lemma}{theorem}
\newtheorem{lemma}[lemma]{Lemma}
\crefname{lemma}{lemma}{lemmas}
\Crefname{Lemma}{Lemma}{Lemmas}
\newaliascnt{corollary}{theorem}
\crefname{corollary}{corollary}{corollaries}
\Crefname{Corollary}{Corollary}{Corollaries}
\newaliascnt{proposition}{theorem}
\newtheorem{proposition}[proposition]{Proposition}
\crefname{proposition}{proposition}{propositions}
\Crefname{Proposition}{Proposition}{Propositions}
\newaliascnt{definition}{theorem}
\crefname{definition}{definition}{definitions}
\Crefname{Definition}{Definition}{Definitions}
\newtheorem{assumption}{\textbf{A}\hspace{-3pt}}
\Crefname{assumption}{\textbf{A}\hspace{-3pt}}{\textbf{A}\hspace{-3pt}}
\crefname{assumption}{\textbf{A}}{\textbf{A}}
\newaliascnt{remark}{theorem}
\newtheorem{remark}[remark]{Remark}
\crefname{remark}{remark}{remarks}
\Crefname{Remark}{Remark}{Remarks}
\crefname{example}{example}{examples}
\Crefname{Example}{Example}{Examples}
\crefname{algorithm}{algorithm}{algorithms}
\Crefname{Algorithm}{Algorithm}{Algorithms}
\crefname{figure}{figure}{figures}
\Crefname{Figure}{Figure}{Figures}
\def\rset{\mathbb{R}}
\def\nset{\mathbb{N}}
\def\param{w}
\def\eqsp{\,}
\def\wrt{w.r.t.}
\def\Algo{\texttt{Generalized AsyncSGD}}
\def\iid{i.i.d.}
\def\nsteps{T}
\newcommandx{\M}[3][1=i,2=k,3=\nsteps]{\operatorname{M}^{#3}_{#1,#2}}
\newcommandx{\m}[3][1=i,2=k,3=\nsteps]{\operatorname{m}^{#3}_{#1,#2}}
\newcommandx{\mi}[1][1=i]{\operatorname{m}_{#1}}
\newcommandx{\weightm}[2][1=k,2=\nsteps]{\operatorname{m}^{#2}_{#1}}
\def\F{\mathcal{F}}
\newcommand{\ooint}[1]{\left(#1\right)}
\newcommand{\Set}{\mathcal{S}}
\newcommand{\norm}[1]{\left \lVert #1 \right \rVert}
\newcommand{\rmd}{\mathrm{d}}
\newcommandx{\CPE}[3][1=]{\PE_{#1}\left[\left. #2 \, \right| #3 \right]}
\def\cardgrad{\operatorname{C}}
 \newcommand{\tcr}[1]{\textcolor{red}{#1}}
\def\PE{\mathbb{E}}
\def\PP{\mathbb{P}}
\newcommandx{\indi}[2][1=]{\1^{#1}_{#2}}
\newcommand{\indiacc}[1]{\1_{\{#1\}}}
\newcommand{\indin}[1]{\1\left\{#1\right\}}
\newcommand{\1}{\ensuremath{\mathbbm{1}}}
\begin{document}
\addtocontents{toc}{\protect\setcounter{tocdepth}{0}}

\twocolumn[

\aistatstitle{Queuing dynamics of asynchronous Federated Learning}

\aistatsauthor{ Louis Leconte \And Matthieu Jonckheere \And Sergey Samsonov \And Eric Moulines}

\aistatsaddress{Lisite, Isep, Sorbonne Univ.\,\,\\Math.\ and Algo. \\Sciences Lab, Huawei Tech.  \And  CNRS, LAAS, \\ and IRIT, France \And HSE University, \\Moscow, Russia \And CMAP\\ Ecole Polytechnique,\\ France} ]

\begin{abstract}
We study asynchronous federated learning mechanisms with nodes having potentially different computational speeds. In such an environment, each node is allowed to work on models with potential delays and contribute to updates to the central server at its own pace. Existing analyses of such algorithms typically depend on intractable quantities such as the maximum node delay and do not consider the underlying queuing dynamics of the system. In this paper, we propose a non-uniform sampling scheme for the central server that allows for lower delays with better complexity, taking into account the closed Jackson network structure of the associated computational graph. Our experiments clearly  show a significant improvement of our method over current state-of-the-art asynchronous algorithms on an image classification problem.
\end{abstract}

\section{Introduction}
Federated learning (FL) is a distributed learning paradigm that allows agents to learn a model without sharing data \citep{konevcny2015federated, mcmahan2017communication}. A central server (CS) coordinates the entire process. In most implementations, the CS uses synchronous operations. During each epoch, the CS communicates with a subset of clients and waits for their "local updates". The CS then uses these local updates to update the global model; \cite{mcmahan2017communication, wang2021high}. Nevertheless, different computational speeds, latencies, and/or transmission bandwidths lead to a cascade of issues such as delays and stragglers. In each epoch, CS  must keep up with the pace of the slowest agent.

A solution called \texttt{FedAsync} eliminates the structured rounds of CS interaction and transitions to asynchronous optimization \citep{xie2019asynchronous}. This approach, along with subsequent works in this direction \citep{chen2020asynchronous, chen2021towards, xu2021asynchronous}, enables asynchronous operation for the CS and agents. \texttt{FedAsync} facilitates the aggregation of agents updates through the CS, rendering the solution highly scalable. More recently, \cite{mishchenko2022asynchronous} has expanded the theoretical comprehension of purely asynchronous SGD within a homogeneous framework where all agents can access identical data;  certain limitations still persist in heterogeneous scenarios.

In practical applications of asynchronous federated learning (FL), interactions between agents and the CS require the use of queues for processing (potentially) multiple jobs. The distribution of processing delays varies significantly across agents, and this variability has been shown to have a negative impact on optimization processes.
 In this paper, we significantly improve the analysis delineated in \cite{koloskova2022sharper}, exploring in depth an asynchronous algorithm—\texttt{AsyncSGD}. This algorithm empowers nodes to queue tasks, initiating communication with the central server upon task completion. The subsequent analysis adheres to a virtual iterates sequence under standard non-convex assumptions. However, previous studies made overly simplistic assumptions about the dynamics of queues, choosing to represent them with an upper bound on the processing delays encountered by the CS. In contrast, our theory intricately models the queuing dynamics using a \textbf{stationary closed Jackson network}. This approach allows capturing precisely the queuing dynamics - number of buffered tasks, processing delay, etc$\dots$-, as a function of agents speed. We integrate assumptions about the service time distributions, enabling us to define the explicit stationary distribution of the number of in-service tasks. 

\paragraph{Contributions.} 
\begin{itemize}[leftmargin=*,itemsep=0pt]
    \item We identify key variables that affect the performance of the optimization procedure and depend on the queuing dynamics.
    \item Building on the findings of our analysis, we introduce a new algorithm called \Algo. This algorithm exploits non-uniform agent selection and offers two notable advantages: First, it guarantees unbiased gradient updates, and second, it improves convergence bounds.
    \item To gain deeper insights, we delve into the limit regimes characterized by large concurrency. In these contexts, our analysis shows that heterogeneity in server speeds can be balanced by the strategic use of non-uniform sampling among agents.
    \item Experimental results show that our approach outperforms other asynchronous baselines on a deep learning experiment.
    \end{itemize}

\paragraph{Related works}
\label{sec:related-works}
Up to this point, the focus has been on synchronous federated learning techniques, as evidenced by notable contributions such as \citep{wang2020tackling, qu2021feddq, makarenko2022adaptive, mao2022communication, tyurin2022dasha}. However, synchronous methods often suffer from suboptimal resource allocation and long training times. Moreover, as the number of participating agents grows, coordinating synchronous rounds with all participants becomes an increasingly difficult task for the central server (CS).

Synchronous federated learning methods are particularly vulnerable to the challenge of stragglers, prompting the emergence of research endeavors rooted in the principles of \texttt{FedAsync} and its subsequent extensions, as elucidated by \citep{xie2019asynchronous}. The core concept revolves around updating the global model upon receiving a local model at the central server (CS).  \texttt{ASO-Fed} \citep{chen2020asynchronous}  introduces memory-based mechanisms on the local client side.  \texttt{AsyncFedED} \citep{wang2022asyncfeded}, drawing inspiration from \texttt{FedAsync}'s instantaneous update strategy, proposes dynamic adjustments to the learning rate and the number of local epochs to mitigate staleness. 

Looking at the problem from a different perspective, \texttt{QuAFL} \citep{zakerinia2022quafl} introduces a concurrent algorithm that aligns closely with the \texttt{FedAvg} strategy. \texttt{QuAFL} seamlessly integrates asynchronous and compressed communication methods while ensuring convergence. In this approach, each client is allowed a maximum of $K$ local steps and can be interrupted. To address the variability in computational speeds across nodes, \texttt{FAVAS} (as discussed by \cite{leconte2023favas}) strikes a balance between the slower and faster clients.  

\texttt{FedBuff}~\citep{nguyen2022federated} addresses asynchrony and concurrency by incorporating a buffer on the server side. Clients conduct local iterations, with the CS updating the global model solely upon completion by a predefined number of different clients.

Similarly, the work presented by \citet{koloskova2022sharper} revolves around Asynchronous SGD (\texttt{AsyncSGD}), offering guarantees contingent on the maximum delay.  Recent developments by \cite{fraboni2023general} expand upon the ideas presented by \cite{koloskova2022sharper}, allowing multiple clients to contribute within a single round.  \citet{liu2021adaptive} diverges from the buffer-centric approach and develops Adaptive Asynchronous Federated Learning (\texttt{AAFL}) to address speed disparities among local devices. Similar to \texttt{FedBuff}, \citet{liu2021adaptive}'s method entails only a fraction of locally updated models contributing to the global model update. Convergence guarantees within asynchronous distributed frameworks commonly rely on an analysis contingent upon the maximum delay \citep{nguyen2022federated, toghani2022unbounded, koloskova2022sharper}, which substantially exceeds the average delay.

\cite{tyurin2023optimal} introduces a novel asynchronous algorithm, presenting optimal convergence guarantees under the assumption of fixed computational speeds among workers over time. Notably, \citet{mishchenko2022asynchronous} conducts an independent analysis of asynchronous stochastic gradient descent that does not rely on gradient delay. However, in the context of a heterogeneous (non-\iid) setting, convergence is guaranteed up to an additive term linked to the dissimilarity limit between the gradients of local and global objective functions.

AsGrad \citep{islamov2023asgrad} is a recent contribution that proposes a general analysis of asynchronous FL under bounded gradient assumption, and adapt random shuffling SGD to the asynchronous case. Most of standard asynchronous baselines can be expressed in the general form proposed in \cite{islamov2023asgrad}, and strong convergence guarantees are provided. But all derivations assume delays are finite quantities.

While an impressive body of research has been dedicated to establishing theoretical tools for the performance evaluation of communication networks, including the development of intricate scheduling mechanisms and models (as exemplified in \cite{Malek2022,Stav2018} and  references therein), the predominant focus has revolved around performance metrics such as delays/completion times (measured in terms of physical time per node), queue lengths, and throughputs. When it comes to modeling federated learning, the application of stochastic network paradigms is significant, based on a wealth of results in the field. However, it is important to recognize that the key metrics to be computed in FL are significantly different from traditional network metrics, as we will discuss in more detail shortly. In particular, the measurement of delay in this context must take into account server steps, which introduces a more complicated dependence on the dynamics of each queue within the network. Moreover, optimizing these novel metrics may require completely different resource allocation paradigms.

\section{Problem statement}
\label{sec:optim_bounds}
We consider the optimization problem:
\begin{equation}
\label{eq:optimization-problem}
\textstyle{\min_{\param \in \rset^d} \sum_{i=1}^{n} \PE_{(x, y) \sim D^{\operatorname{data }}_i}[\ell_{i}(\mathrm{NN}(x, \param), y)]}\eqsp.
\end{equation}
Here $d$ is the number of parameters (network weights and biases), $n$ is the total number of clients, $\ell_{i}$ is the local loss function, $\mathrm{NN}(x,\param)$ is the DNN prediction function,
$D^{\operatorname{data}}_i$ is the training data distribution on node $i$. In FL, the distributions $D^{\operatorname{data}}_i$ are allowed to differ between clients (statistical heterogeneity). Let us denote by
\begin{equation}
\textstyle{
f_i(w) := \PE_{(x, y) \sim D^{\operatorname{data }}_i}[\ell_{i}(\mathrm{NN}(x, \param), y)]
}
\end{equation}
the local function optimized on node $i$ and $f:= \frac{1}{n} \sum\nolimits_{i=1}^n f_i$. Each node $i$ does not compute the true gradient of the function $f_i$, but has access to a \emph{stochastic} version of the gradient, denoted by $\Tilde{g}_{i}$.

We consider the \emph{task} as a computation of a gradient (or possibly stochastic gradient) on one of the clients. We assume that $n$ clients process a fixed number of tasks $\cardgrad \in \nset^*$ in parallel, and the total number of CS steps is $\nsteps$. Once a task is completed by an agent, the corresponding update is sent to the CS, which updates the global model and then passes the updated parameters to a new agent with probabilities $(p_i)_{i=1}^n$. The selected agent might already be busy computing a previous update. When the agent is busy, the new job enters a queue that is serviced on a first-in-first-out basis (FIFO). To perform this analysis, we must establish the following definitions:
\begin{itemize}[leftmargin=*,itemsep=0pt]
\item  $J_k \in [1, n]$ is the node completing a task at the $k$-th CS epoch (or step),
\item   $K_{k+1} \in [1, n]$ is the node selected at step $k \in [1,\nsteps]$,
\item  $ X_{i,k}$ is the number of tasks in node $i$ at step $k$, $i \in [1,n]$, $k \in [1,\nsteps]$.
\end{itemize}
All these random variables can be constructed as deterministic functions of the \iid\ sequence $(R_l)_{l \in \mathbb N}$ which stands for the routing decisions and the sequences $(\xi_l^i)_{l \in \mathbb N,i=1, \ldots, n}$ which stand for the service times (durations) of the tasks in each node. These two \iid\ sequences are independent.

We denote by $\M$ the number of CS steps between the time that a task is sent to node $i$ and the time it is completed:
\begin{equation}
    \M[i][k][\nsteps]= \indiacc{i}(K_{k+1}) \sum\nolimits_{r=k}^{\nsteps} \indi{(\sum\nolimits_{l=k}^r\indi{J_l=i})<X_{i,k}}
\end{equation}
Finally, for $k \in \{0,\dots, \nsteps\}$ we define
\begin{equation}
\textstyle{I^T_k = \sum_{l=0}^k l \cdot \indiacc{k-l}(\M[K_{l+1}][\ell][\nsteps])}\eqsp,
\end{equation}
which is the CS step corresponding when the task was dispatched to node $J_k$.

Direct analysis of the server iterate $(\param_k)_{k\geq 0}$ is difficult because we do not have access to the joint distribution of $(J_k)_{k\geq 0}$, $(\M)_{k \ge 0}$ and $(I_k^T)_{k \ge 0}$. \cite{koloskova2022sharper} assumes an upper bound on the number of gradients that are pending at step $k$ but have not yet been applied. \cite{koloskova2022sharper} proposes to select new nodes with uniform probability. In \Algo\ (see \Cref{algo:new}), we add some degree of freedom by allowing the central server to select a new node $K_{k+1}$ with (possibly non-uniform) probability $\mathbf{p}= (p_j)_{j=1}^n$.
\begin{algorithm}[h]
\caption{\Algo \label{algo:new}}
\SetKwInOut{Input}{Input}
\SetKwInOut{Output}{Output}
\SetKwBlock{Loop}{Loop}{end}
\SetKwBlock{Initialize}{Initialize}{end}
\SetKwFor{When}{When}{do}{end}
\SetKwFunction{Wait}{Wait}
\SetKwFunction{ClientMain}{ClientLocalTraining}
\SetAlgoLined
\Input{Number of server steps $\nsteps$, Number of tasks $\cardgrad$ \;}
\tcc{\textbf{At the Central Server}}
\Initialize{
Initialize parameters $w_0$\;
Select initial set of clients $\Set_0$, with $| \Set_0 | =\cardgrad$ \;
Server sends $w_0$ to each client in $\Set_0$\;
All clients in $\Set_0$ compute gradients on $w_0$ \;
Compute optimal $(\mathbf{p},\eta)$ by minimizing \eqref{eq:optim} \;
}
\For{$k=0,\dots,\nsteps$}{
Server receives stochastic gradient $\Tilde{g}_{J_k}(\param_{I_k})$ \;
Update $\param_{{k+1}} \gets \param_{k} - \frac{\eta}{np_{J_k}} \Tilde{g}_{J_k}(\param_{I_k})$ \;
Sample a new client $K_{k+1}$ with prob. $p_{K_{k+1}}$\;
Send new model $\param_{{k+1}}$ to $K_{k+1}$ \;
}
\end{algorithm}

We denote
\[
\m := \PE[\M]\eqsp, \text{ and } \weightm := \sum_{i=1}^n \m/(n^2p_i^2)\eqsp.
\]
It is worth noting that $\m$ depends on the sampling probability $p_i$, but for simplicity, we prefer not to index explicitly by $\mathbf{p}:= (p_j)_{j=1}^n$.
We will show in \Cref{sec:closed_network} that, $\lim_{k \to \infty} \lim_{T \to \infty} \m = \mi $
(these expectations become stationary)
see \Cref{prop:mik}.
Of course, the exact analysis of the transient behavior is very complex:  simple upper bounds can be computed, but these are generally not expressive and hide the influence of key parameters.
\begin{figure}
    \centering
    \includegraphics[scale=.4]{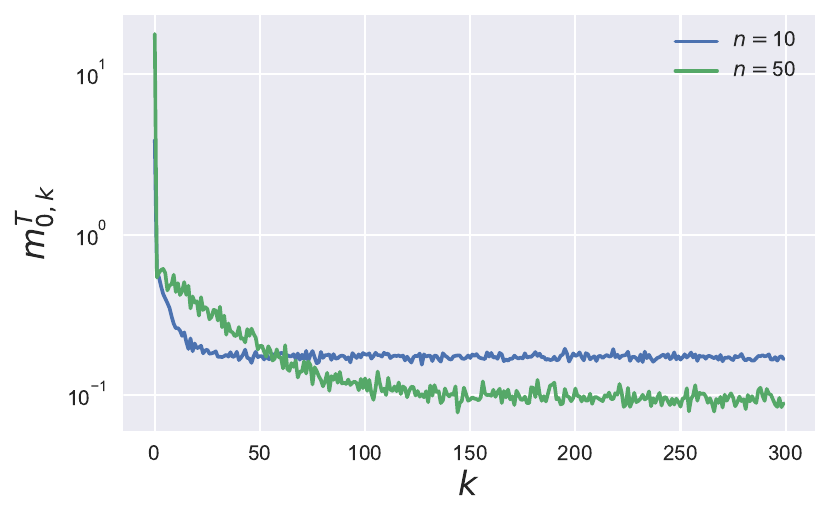}
    \caption{Evolution of $\m$ \wrt\ $k$, for two networks of size $n=10, 50$ initialized with full concurrency.}
    \label{fig:stationarity}
\end{figure}
In \Cref{fig:stationarity}, we simulate $n=10$ and $n=50$ nodes with $\cardgrad=n$ initial tasks. In this simulation, nodes $\{0, 1, 2, 3, 4\}$ are $10$ times faster than the other nodes to compute a task. Without loss of generality, we focus on the first node $(i=1)$, and we plot the value of $\m$ with respect to $k$, for $T=500$. The value of $\m$ becomes stationary when $k > 50$ and $k>150$, for $n=10$ and $n=50$, respectively.

In our analysis, in line with the approach presented in \cite{koloskova2022sharper} (but the same idea was applied  earlier), we introduce the  \emph{virtual iterates} $\mu_k$ as follows:
\begin{equation}
\begin{cases}
    \mu_0 = \param_0,\\
    \mu_1 = \mu_0 -\eta \sum_{i \in \Set_0} \frac{1}{np_i}\Tilde{g}_{i}(\param_{0}),\\
    \mu_{{k+1}} = \mu_{k} - \frac{\eta}{np_{K_k}} \Tilde{g}_{K_k}(\param_{k})\eqsp, \quad k \geq 1\eqsp.
\end{cases}
\end{equation}
In fact $\mu_k$ is defined as if the selected client $K_{k}$ was instantaneously contributing to the server update. Note that the gradients are computed on $\param_{k}$, not on $\mu_{k}$. The difference between $\mu_{k}$ and $\param_{k}$ consists of all the gradients computed (on potentially outdated $\param$'s) and not applied yet, $\mu_k-\param_k=\sum\nolimits_{(i,j) \in \mathcal{I}_k} \frac{-1}{np_i}\Tilde{g}_{i}(\param_{j})$, with $\mathcal{I}_k = \{ (i, j) \in [1,n] \times   [1,k] | (X_{i,k} \neq 0)  \text{ and }
 (\sum\nolimits_{i=1}^n \M[i][j][\nsteps] > k-j) \}$.

\section{Non-convex bounds}
\label{sec:nonconvexbounds}
Following the setting considered in \cite{koloskova2022sharper}, \cite{nguyen2022federated}, we focus on the scenario of the optimization problem \eqref{eq:optimization-problem} with $L$-smooth and nonconvex objective functions $f_i$. Proofs are detailed in \Cref{sec:app:proofs}. Our analysis is based on the following assumptions:
\begin{assumption}
\label{assum:uniflowerbound}
    Uniform Lower Bound: There exists $f_* \in \mathbb{R}$ such that $f(\param) \geq f_*$ for all $\param \in \mathbb{R}^d$.
\end{assumption}
\begin{assumption}
\label{assum:smoothgradients}
    Smooth Gradients: For any client $i$, the gradient $\nabla f_i$ is $L$-Lipschitz continuous for some $L>0$, i.e. for all $\param, \mu \in \mathbb{R}^d$:
$
\left\|\nabla f_i(\param)-\nabla f_i(\mu)\right\| \leq L\|\param-\mu\| .
$
\end{assumption}
\begin{assumption}
\label{assum:boundedvariance}
    Bounded Variance: For any client $i$, the variance of the stochastic gradients is bounded by some $\sigma^2>0$, i.e. for all $\param \in \mathbb{R}^d$:
$
\mathbb{E}[\left\|\widetilde{g}_i(\param)-\nabla f_i(\param)\right\|^2] \leq \sigma^2 .
$
\end{assumption}
\begin{assumption}
\label{assum:graddissim}
Bounded Gradient Dissimilarity: There exist constant $G$, such that for all $\param \in \mathbb{R}^d$:
$\left\|\nabla f_i(\param)-\nabla f(\param)\right\|^2 \leq G^2.$
\end{assumption}
The assumption \Cref{assum:boundedvariance} can be generalized to the \emph{strong growth condition} \cite{vaswani2019fast}:
\[
\textstyle{
\mathbb{E}[\left\|\widetilde{g}_i(\param)-\nabla f_i(\param)\right\|^2] \leq \sigma^2 + \rho^2 \left\| \nabla f_{i}(\param)\right\|^2
}\eqsp,
\]
following  \cite{beznosikov2023first}. Full details are given in the appendix.

\begin{theorem}
\label{theorem:cvgquant}
Assume \Cref{assum:uniflowerbound} to \Cref{assum:graddissim} and let the learning rate $\eta$ satisfy $\eta \leq \eta_{\max}(\mathbf{p})$, where
\begin{equation}
\label{eq:condition-learning-rate}
 \eta_{\max}(\mathbf{p}) =:\frac{1}{4L}\bigl(\cardgrad^{-1/2} \max_{k \leq \nsteps} \{ \weightm \}^{-1/2} \wedge 2/\sum\nolimits_{i=1}^n\frac{1}{n^2p_i}\bigr)\eqsp.
\end{equation}
Then \Algo\ converges at rate:
\begin{align}
\label{eq:cvg_theorem}
    \sum_{k=0}^{\nsteps} &\frac{\PE[\Vert \nabla f(w_k) \Vert ^2]}{8(\nsteps+1)}  \leq \frac{\PE[f(\mu_0) - f(\mu_{\nsteps+1})]}{\eta (\nsteps+1)} \\ \nonumber
&
+ \frac{\eta L B}{n} \sum\nolimits_{i=1}^n \frac{1}{n p_i} + \frac{\eta^2 L^2 B \cardgrad}{n} \sum\nolimits_{i=1}^n  \frac{\sum\nolimits_{k=0}^{\nsteps} m_{i,k}^\nsteps}{n p_i^2(\nsteps+1)} \eqsp,
\end{align}
where $B=2 G^2 + \sigma^2$.
\end{theorem}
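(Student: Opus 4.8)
The plan is to run the virtual-iterate descent argument, with two features specific to this setting: the non-uniform sampling is used to restore unbiasedness, and the lag term $\mu_k-w_k$ is controlled through the queuing quantities $\M$ rather than through a crude maximum delay. First I would apply $L$-smoothness (\Cref{assum:smoothgradients}) to the virtual iterates, giving for $k\ge 1$
\[
f(\mu_{k+1})\le f(\mu_k)+\ps{\nabla f(\mu_k)}{\mu_{k+1}-\mu_k}+\tfrac{L}{2}\norm{\mu_{k+1}-\mu_k}^2\eqsp.
\]
Let $\F_k$ be the $\sigma$-field generated by the routing and service history together with the gradient noise up to step $k$. The decisive observation is that the importance weight $1/(np_{K_k})$ exactly cancels the sampling bias: since $K_k$ is drawn with probabilities $\mathbf p$ independently of $w_k$ and the noise is unbiased,
\[
\CPE{\tfrac{1}{np_{K_k}}\Tilde{g}_{K_k}(w_k)}{\F_k}=\tfrac1n\sum\nolimits_{j=1}^n\nabla f_j(w_k)=\nabla f(w_k)\eqsp.
\]
Thus the conditional drift is $-\eta\nabla f(w_k)$, and $w_k$ (not $\mu_k$) is the point at which the gradient norm appears — exactly what the reweighting buys us.

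Next I would treat the two remaining terms. For the cross term I would write $\nabla f(\mu_k)=\nabla f(w_k)+(\nabla f(\mu_k)-\nabla f(w_k))$ and apply Young's inequality with a constant tuned so that $-\eta\ps{\nabla f(\mu_k)}{\nabla f(w_k)}$ contributes $-\tfrac34\eta\norm{\nabla f(w_k)}^2+\eta\norm{\nabla f(\mu_k)-\nabla f(w_k)}^2$, the last factor being bounded by $\eta L^2\norm{\mu_k-w_k}^2$ via \Cref{assum:smoothgradients}. For the second-order term, conditioning and combining \Cref{assum:boundedvariance} with \Cref{assum:graddissim} yields $\CPE{\normin{\Tilde{g}_j(w_k)}^2}{\F_k}\le B+2\norm{\nabla f(w_k)}^2$ with $B=2G^2+\sigma^2$, so that $\tfrac{L}{2}\CPE{\normin{\mu_{k+1}-\mu_k}^2}{\F_k}\le \tfrac{L\eta^2}{2}\sum_{j}\tfrac{1}{n^2p_j}\bigl(B+2\norm{\nabla f(w_k)}^2\bigr)$. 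This produces the constant term $\tfrac{\eta LB}{n}\sum_i\tfrac1{np_i}$ and a $\norm{\nabla f(w_k)}^2$ contribution to be absorbed later.

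The crux is the lag term $\norm{\mu_k-w_k}^2$. Using the representation $\mu_k-w_k=-\eta\sum_{(i,j)\in\mathcal I_k}\tfrac1{np_i}\Tilde{g}_i(w_j)$ and the fact that at most $\cardgrad$ tasks are ever in flight, Cauchy–Schwarz gives $\norm{\mu_k-w_k}^2\le\eta^2\cardgrad\sum_{(i,j)\in\mathcal I_k}\tfrac1{n^2p_i^2}\normin{\Tilde{g}_i(w_j)}^2$. I would then sum over $k$ and exchange the order of summation through the combinatorial identity $\sum_{k}\indin{(i,j)\in\mathcal I_k}=\M[i][j]$ — a task dispatched to node $i$ at step $j$ contributes to the lag for exactly as many server steps as it stays in service. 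Taking expectations turns $\M[i][j]$ into $\m[i][j]=\PE[\M[i][j]]$, producing the third term $\tfrac{\eta^2L^2B\cardgrad}{n}\sum_i\tfrac{\sum_k \m[i][k]}{np_i^2(\nsteps+1)}$ together with a further $\norm{\nabla f(w_j)}^2$ contribution bounded through $\weightm\le\max_{k}\weightm$. I expect this exchange to be the main obstacle: one must decouple the queuing quantities $\M[i][j]$ from the stochastic-gradient norms $\normin{\Tilde{g}_i(w_j)}^2$. This is where I would use that, conditionally on the \iid\ routing and service-time sequences, every delay $\M[i][j]$ and every index set $\mathcal I_k$ is a deterministic function of those sequences and hence independent of the gradient noise, so the expectation factorizes and the residual gradient cross-terms are controlled by $\max_k\weightm$.

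Finally I would sum the per-step inequality over $k=0,\dots,\nsteps$ (treating the initial full-concurrency step $\mu_0\to\mu_1$ separately), telescope $f(\mu_k)$ to obtain $\PE[f(\mu_0)-f(\mu_{\nsteps+1})]$, and collect all $\norm{\nabla f(w_k)}^2$ terms. The two branches of the learning-rate condition \eqref{eq:condition-learning-rate} are designed precisely for this absorption: the branch $\eta\le 1/(2L\sum_i \tfrac1{n^2p_i})$ keeps the variance contribution below $\tfrac\eta2\norm{\nabla f(w_k)}^2$, while the branch $\eta\le \tfrac1{4L}\cardgrad^{-1/2}\max_k\{\weightm\}^{-1/2}$ keeps the lag contribution below $\tfrac\eta8\norm{\nabla f(w_k)}^2$. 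Against the $-\tfrac34\eta\norm{\nabla f(w_k)}^2$ coming from the cross term, this leaves a net $\tfrac\eta8\norm{\nabla f(w_k)}^2$ on the left, and dividing by $\eta(\nsteps+1)$ yields \eqref{eq:cvg_theorem}.
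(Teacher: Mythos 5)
Your proposal follows essentially the same route as the paper: the virtual-iterate descent inequality with unbiasedness restored by the $1/(np_{K_k})$ importance weights, Cauchy--Schwarz on the lag $\mu_k-w_k$ using the constant number $\cardgrad$ of in-flight tasks, the exchange of summation turning $\sum_k\indin{(i,j)\in\mathcal I_k}$ into $\M[i][j]$, and absorption of the residual gradient-norm terms via the two branches of the step-size condition. The only differences are cosmetic (Young's inequality in place of the polarization identity for the cross term, and slightly different constant bookkeeping), and you in fact flag the $\M$-versus-gradient-noise decoupling more explicitly than the paper does.
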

The upper bound in \Cref{theorem:cvgquant} includes three distinct terms. The first term is a standard component associated with a general nonconvex objective function; it expresses how the choice of initialization affects the convergence process.

The second and third terms depend on the statistical heterogeneity within the client distributions and the fluctuations of the minibatch gradients. If we assume uniform probabilities, the second term agrees with that of \texttt{FedAvg}: this is the bound that would be obtained with synchronous optimization. In contrast, the third term encapsulates the unique challenge posed by optimization within an asynchronous framework.

Before moving on to the main study, it is important to analyze the behavior of this bound. Note first that the upper bound is minimized by $T \to \infty$ if we set $\eta = O( T^{-1/2})$. In this setting, the third term of the upper bound, which is proportional to $\eta^2$, becomes negligible. To obtain the optimal probability value $\mathbf{p}$, one should minimize $\sum_{i=1}^n 1/p_i$ in this regime, subject to the condition $\sum_{i=1}^n p_i = 1$. This minimization is achieved when $p_i =1/n$. Thus, with $T \to \infty$, a uniform distribution of weights turns out to be a reasonable choice.

\paragraph{A worked-out example} For regimes other than \( T \to \infty \), the bound given by Eq.~\eqref{eq:cvg_theorem} proves difficult to handle due to the complicated relationship between \( \m \) and the sampling distribution \( \mathbf{p} \). In the next section, we will apply queuing theory methods to shed light on these quantities. However, before diving into this detailed analysis, we will first examine the behavior of the bound using a simple example.
We choose the sampling probabilities $\mathbf{p}$ and the step size $\eta$ by solving the constrained optimization problem $\min_{\mathbf{p},\eta} G(\mathbf{p},\eta)$ as a function of $\eta \leq \eta_{\max}(\mathbf{p})$, where
\begin{multline}
G(\mathbf{p},\eta)= \frac{A}{\eta (\nsteps+1)} + \frac{\eta L B}{n}\sum\nolimits_{i=1}^n \frac{1}{np_i} \\ + \frac{\eta^2 L^2 B \cardgrad}{n} \sum\nolimits_{i=1}^n  \frac{\sum\nolimits_{k=0}^{\nsteps} \m}{n p_i^2(\nsteps+1)}\eqsp,
\label{eq:optim}
\end{multline}
and where $A=\PE[f(\mu_0) - f(\mu_{\nsteps+1})]$. To better understand this bound, let us examine a simple case. Suppose we have $n=100$ clients that are classified as either "fast" or "slow".  There are $n_f=90$ fast clients and $n-n_f=10$ slow clients, which are assumed to have the same characteristic (within each group).  We will focus on how the proposed bound behaves based on the ratio of the processing speed of the "fast" and "slow" clients, the proportions of fast and slow clients, and the concurrency. We will also examine two situations: one in which the processing time for gradient requests is fixed, and another in which it follows an exponential distribution. By default, slow clients process a gradient in a time unit of 1 (on average in the random case), while fast clients take $\frac{1}{\mu_f} \leq 1$ units on average. Let $p \in \ooint{0,1}$. We denote $p$ the probability to select one of the  fast clients, and $q=\frac{1}{n-n_f}-p\frac{n_f}{n-n_f}$ the probability of selecting one of the slow clients (we need $n_fp+(n-n_f)q=1$). The parameters are $L = 1$, $B = 20$ (to assess the effects of gradient noise and statistical heterogeneity), $A = 100$ (to highlight the impact of initial conditions). The number of tasks is varied $\cardgrad=10,50,100$, to assess the impact of concurrency. For the total number of CS iterations, we consider $\nsteps = 10^4$. We plot the selection probability $p$ versus $\mu_f$, ranging from $2$ to $16$ in \Cref{fig:proba_optimal}.
\begin{figure}
    \centering
    \includegraphics[scale=.4]{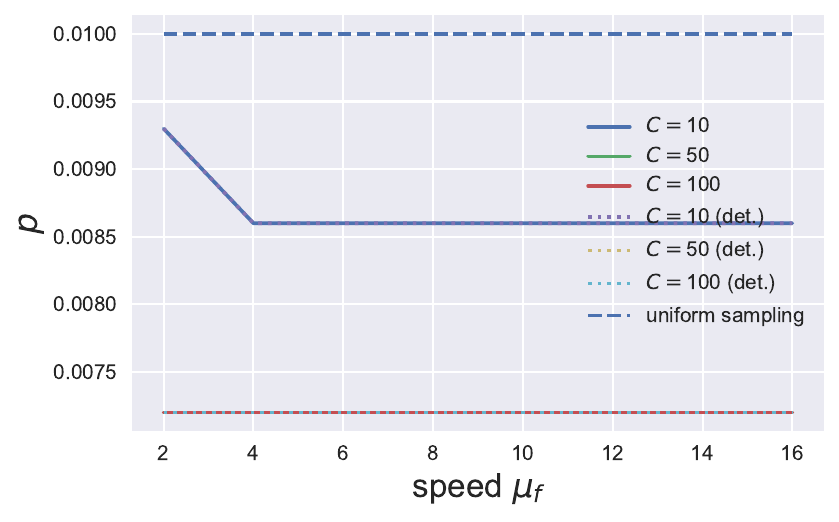}
    \caption{Optimal sampling probability $p$ as a function of the speed for different concurrency levels. The number of nodes is fixed to $n=100$ nodes.}
    \label{fig:proba_optimal}
\end{figure}
Furthermore, we graphically illustrate the relative improvement of the upper bound when compared with the uniform selection problem in \Cref{fig:relative_bounds}.
\begin{figure}
    \centering
    \includegraphics[scale=.4]{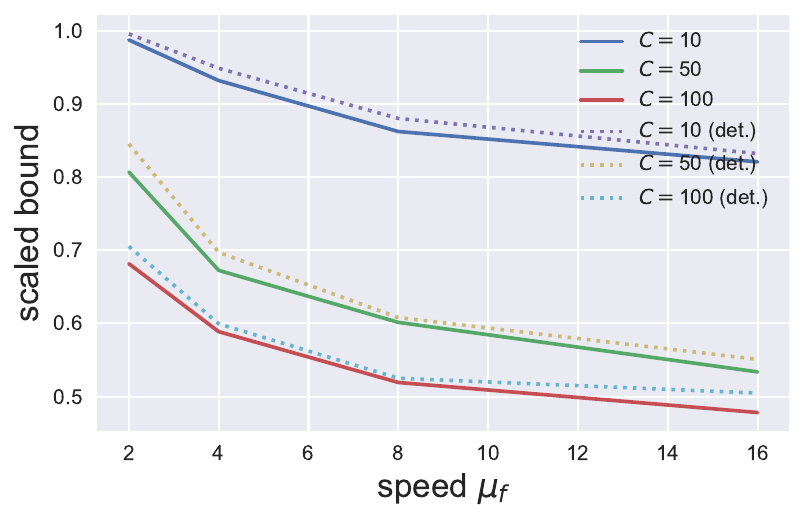}
    \caption{Relative improvements of the upper bounds as a function of the speed for different concurrency levels. The number of nodes is fixed to $n=100$ nodes.}
    \label{fig:relative_bounds}
\end{figure}
The results show that a significant improvement may be achieved (from 30\% when $\mu_f=2$ to 55\% when $\mu_f=16$). To achieve this improvement, we should decrease the probability of selecting fast clients to \(p= 7.3  \cdot 10^{-3} \). The conclusion (that will be justified theoretically in the next section) is that fast agents should be selected less frequently than slow agents. Even though this result may appear to be counter intuitive, it is justified by the fact that by selecting slow customers more frequently, processing times are reduced. Our simulations shows that the average delay at the CS is divided by $10$ and $2$, for fast and slow nodes respectively. More details are given in \Cref{sec:app:2clusters}.

Finally, these experiments also show that the distribution of the working time required for gradient evaluation does not have a significant impact: results are very similar whether the working time is deterministic or distributed according to an exponential (and therefore random) distribution (provided that the mean are preserved).

\paragraph{Comparison with \texttt{FedBuff} and \texttt{AsyncSGD}}
We emphasize that previous analyses of both \texttt{FedBuf} and \texttt{AsyncSGD} are based on strong assumptions: the queuing process is not considered in their analysis. In practice, slow clients with delayed information contribute. \cite{nguyen2022federated,koloskova2022sharper} propose to bound this delay uniformly by a quantity $\tau_{max}$. We retain this notation while reporting complexity bounds in \Cref{tab:cvgcomparison}, but argue that nothing guarantees that $\tau_{max}$ is properly defined.
\begin{table*}
    \centering
    \caption{Asynchronous bounds (up to numerical constants) under non-convex assumption for $\nsteps$ server steps. $\cardgrad$ is the number of initial tasks. $A=\PE[f(\mu_0) - f_*]$, and $B = {2G^2+\sigma^2}$. \tcr{$\tau_{\max}$} is defined in \cite{toghani2022unbounded} as the maximum delay. $\textcolor{magenta}{\tau_{c}}, \tau_{\text{sum}}^i$ are defined in \cite{koloskova2022sharper} as the average number of active nodes, and the sum of delays of node $i$, respectively.}
\resizebox{\linewidth}{!}{
\begin{tabular}{l|c|c}
\toprule
Method & Bounds & $\eta$\\
         \midrule
        \texttt{FedBuff} & $\frac{A}{\eta (\nsteps + 1)} + {\eta} LB + \eta^2 \tcr{{\tau^2_{\max}}} L^2Bn$ & $\leq \frac{1}{L\sqrt{\tcr{{\tau^3_{\max}}}}}$\\
        \texttt{AsyncSGD} & $\frac{A}{\eta (\nsteps+1)} + {\eta} LB + \eta^2 \textcolor{magenta}{\tau_{c}} {L^2B} \sum\nolimits_{i=1}^n  \frac{\tau_{\text{sum}}^i}{\nsteps+1} $ & $\leq \frac{1}{L\sqrt{\tau_c\tcr{\tau_{\max}}}}$\\
        \thead{\texttt{Generalized} \\ \texttt{AsyncSGD}} & $\frac{A}{\eta (\nsteps + 1)} + {\eta}LB\sum\nolimits_{i=1}^n \frac{1}{n^2p_i} + \eta^2 \cardgrad L^2B \sum\nolimits_{i=1}^n  \frac{\sum\nolimits_{k=0}^{\nsteps} \m}{n^2p_i^2(\nsteps+1)} $ & $\leq \frac{1}{L\sqrt{ \cardgrad \max_{k \leq \nsteps}\weightm}}$ \\
        \bottomrule
    \end{tabular}
     }
    \label{tab:cvgcomparison}

\end{table*}
In \Cref{fig:bounds_comparison} we compared the relative improvements of the upper bounds obtained with \Algo, \wrt\ \texttt{FedBuff} and \texttt{AsyncSGD} for the scenario described in the previous paragraph. The plot illustrates the massive improvement achieved by \Algo\ when optimal selection probabilities are used. It also illustrates that the bounds previously reported in the literature do not capture the essence of the problem. In particular, this comparison holds under the condition that the work time for gradient evaluation is deterministic, such that \( \tau_{\max} \) equals \( C \) times the work time of a slow client. When the working time is exponential, the maximum delay as defined in the analyses of \texttt{FedBuff} and \texttt{AsyncSGD} is infinite, and the bounds in \cite{nguyen2022federated} and \cite{koloskova2022sharper} are then empty.
\begin{figure}
    \centering
    \includegraphics[scale=.4]{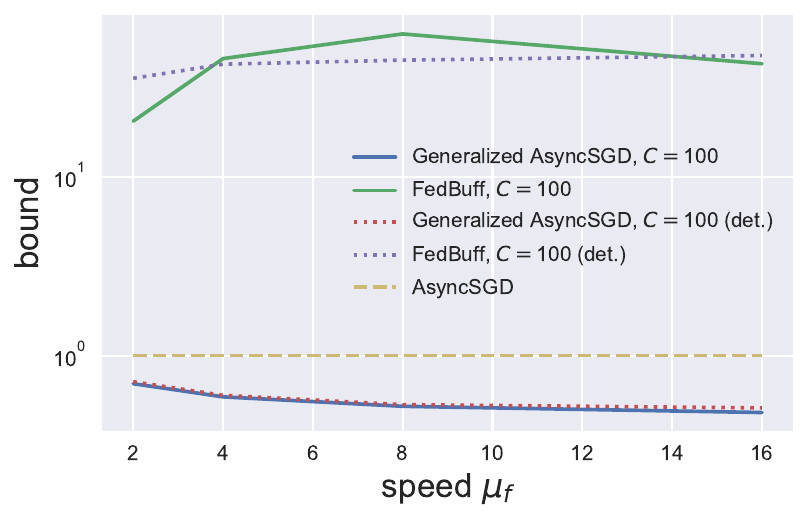}
    \caption{Relative improvement of \Algo\ over \texttt{FedBuff} and \texttt{AsyncSGD} as a function of  speed. The number of nodes is fixed to $n=100$ nodes.}
    \label{fig:bounds_comparison}
\end{figure}

\paragraph{Physical time \wrt\ CS number of epochs} The complexity bounds of \Algo\ are based on the number of communication rounds. The conclusions would, of course, be different if we took physical time as the criterion. Indeed, when we determine complexity in terms of number of communications, we do not take into account the time intervals between two successive arrivals at the central server (whose law depends on the relative speeds of the agents and the weights $\mathbf{p}$). We discuss bounds for \Algo\ \wrt\ physical time in \Cref{sec:app:physical_time} and assess them.

\section{Closed network}\label{sec:Jack}
\label{sec:closed_network}
The aim of this section is to obtain theoretical guarantees using precise results on closed Jackson networks \citep{jackson1954queueing, jackson1957networks}. We assume in this section that task duration follows an exponential distribution, with each user having its own mean. While it is feasible to extend these findings to deterministic durations and even almost arbitrary duration distributions, this complicates the theory. This approach not only captures the different speeds of the clients, but also allows for a precise assessment of the system dynamics. Consequently, we can accurately determine the critical values $\m$ in a steady state. Detailed proofs are postponed to the \Cref{sec:app_proof_queuing}.
\paragraph{Stationary distribution and key performance indicators}

Let us denote by $(D_i(t))_{i=1,\ldots,n}$ the number of task departures from node $i$ at time $t$, with the convention that $D_i(0)=0$.
$(T_{i,l})_{l \in \mathbb N}$ are the jump
times associated with the counting process $D_i$.
We further denote by $N(t)$ the number of tasks arriving at the CS, given by
$$
\textstyle{N(t)= \sum_{i=1}^n D_i(t)}\eqsp,
$$
while $(T_l)_{l \in \mathbb N}$ are the jump times associated with $N$. Observe that the indices $k$ used in the previous Section correspond to those jump times.
Finally we define the sequences of times $(\tau_{i,l})_{l \in \mathbb N}$ as the arrival times to node $i$ after time $0$.

In what follows, we assume that the task duration is \iid\ and exponentially distributed at rate $\mu_i$, and that the routing decisions are also \iid\ (and independent of everything else).
As in the previous section, we denote by $p_i$ the probability that the dispatcher sends a task to node $i$.

We denote by $X(t)=(X_{1}(t), \ldots, X_n(t))$ the continuous-time stochastic process describing the number of tasks in each node. The unit vectors in $\mathbb N^n$ are denoted by $(e_i)_{i=1\ldots,n}$.
We have the following results for $X$.
\begin{proposition}\label{prop:Jack}
Under the above assumptions, the dynamics of $(X(t), t\geq 0)$ is that of a closed Jackson network on the complete graph with $n$ nodes and $\cardgrad$ tasks.
The generator of the corresponding jump Markov process is given for all $x \in \mathbb N^n, i \in \mathbb N, j \in \mathbb N$ by
\[
q(x,x+e_i-e_j) = p_i \mu_j \mathds{1}(x_j >0)\eqsp.
\]
Furthermore, defining $\theta_i=\frac{p_i}{\mu_i}$, the stationary distribution of $X$ may be expressed as:
\begin{equation}
\textstyle{
\pi_C(x_1, \dots, x_{n}) = H_{\cardgrad}^{-1} \prod\nolimits_{i=1}^{n} \theta_i^{x_i}}\eqsp,
\end{equation}
with
$H_{\cardgrad} = \sum_{x: \sum_i x_i = \cardgrad } \prod \nolimits_{i=1}^{n} \theta_i^{x_i}
$.
\end{proposition}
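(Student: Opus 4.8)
The plan is to recognize the process $(X(t),t\ge 0)$ as a continuous-time Markov chain on the finite state space $\{x \in \mathbb N^n : \sum_i x_i = \cardgrad\}$ and to verify directly that it is a closed Jackson network, for which the product-form stationary distribution is classical \citep{jackson1957networks}. First I would write down the dynamics carefully. A transition occurs whenever some node $j$ completes a task; since service times are exponential at rate $\mu_j$ and each node serves FIFO (a single server per node), a departure from node $j$ happens at rate $\mu_j \mathds{1}(x_j>0)$. Upon completion, the finished task is routed by the dispatcher to node $i$ with probability $p_i$ (independently of everything else). The net effect on the state is $x \mapsto x + e_i - e_j$: one task leaves $j$ and one enters $i$. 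Multiplying the departure rate by the routing probability gives exactly the claimed generator $q(x, x+e_i-e_j) = p_i\mu_j\mathds{1}(x_j>0)$, and all other off-diagonal rates are zero. This establishes the first assertion; the total number of tasks is conserved, so the chain lives on the simplex of configurations summing to $\cardgrad$, confirming it is a \emph{closed} network on the complete graph.

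Next I would derive the stationary distribution. The standard route is to solve the traffic (flow-balance) equations for the routing: because routing is state-independent and sends every completed task to node $i$ with probability $p_i$, the visit rates $\lambda_i$ satisfy $\lambda_i = p_i \sum_{j} \lambda_j$, so up to normalization $\lambda_i \propto p_i$. The per-node load is then $\theta_i = \lambda_i/\mu_i \propto p_i/\mu_i$, matching the definition $\theta_i = p_i/\mu_i$ in the statement. The Gordon--Newell product-form theorem for closed single-server Jackson networks then gives $\pi_C(x) = H_{\cardgrad}^{-1}\prod_{i=1}^n \theta_i^{x_i}$ on the constraint surface, with $H_{\cardgrad}$ the normalizing constant summing over all admissible configurations. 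Rather than invoke the theorem as a black box, I would prefer to verify it by a guess-and-check: substitute the candidate $\pi_C(x) \propto \prod_i \theta_i^{x_i}$ into the global balance equations and show they hold.

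The cleanest verification is to check \emph{partial balance} (the hallmark of Jackson networks), which implies global balance. For each ordered pair $(i,j)$ I would show that the probability flux out of state $x$ via the transition $x \to x+e_i-e_j$ is matched appropriately. Concretely, using $\pi_C(x+e_i-e_j)/\pi_C(x) = \theta_i/\theta_j = (p_i\mu_j)/(p_j\mu_i)$, one checks that the detailed-balance-like relation $\pi_C(x)\, p_i\mu_j\mathds{1}(x_j>0)$ balances against the reverse flux, and that summing the balance contributions over all $j$ (the node that just completed) reproduces the flow-balance identity $\sum_j p_i = p_i$ together with $\sum_i p_i = 1$. After cancellation the global balance equation reduces to the traffic equations $\lambda_i = p_i \sum_j \lambda_j$, which hold by the choice $\lambda_i \propto p_i$; this confirms $\pi_C$ is stationary, and normalization by $H_{\cardgrad}$ makes it a probability measure.

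I expect the main obstacle to be bookkeeping rather than conceptual: the partial-balance computation must correctly handle the boundary indicators $\mathds{1}(x_j>0)$, since a transition into $x$ from $x-e_i+e_j$ requires $x_i>0$, and one must be careful that the ratio $\theta_i/\theta_j$ is used consistently and that the telescoping of the indicator terms across the sum over $j$ produces exactly the traffic equations without residual boundary terms. The other point worth stating explicitly is that the chain restricted to the finite constraint surface $\sum_i x_i = \cardgrad$ is irreducible (any configuration reaches any other by moving tasks one at a time through the complete routing graph), so the stationary distribution is unique and the product form is the genuine stationary law, not merely a stationary measure.
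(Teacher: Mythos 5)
Your proposal is correct and follows essentially the same route as the paper, which simply observes that the stated generator is immediate from the exponential service and independent routing assumptions and then cites the classical product-form result for closed Jackson networks (Theorem 1.12 in Serfozo). Your additional guess-and-check of global balance via the ratio $\pi_C(x+e_i-e_j)/\pi_C(x)=\theta_i/\theta_j$ and the traffic equations $\lambda_i\propto p_i$ is a sound, more self-contained way to carry out the step the paper delegates to the reference, but it is not a different approach.
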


Building on the understanding gained in \Cref{sec:optim_bounds}, our goal is to quantify the number of server steps that are executed when a new task arrives at a given node (say node $i$)
 and subsequently returns to the dispatcher.

For simplicity, the analysis is performed in the stationary regime. In particular, this means that at time $0$ the distribution of the number of tasks in each node follows the product measure defined in \Cref{prop:Jack}. We denote by $\PE^{\cardgrad}$ the stationary average of the closed Jackson network when the total number of tasks is equal to $\cardgrad$.
We can now state the main result of this section:
\begin{proposition}\label{prop:mik}
Given the model assumptions and assuming stationarity, for all $k \in \nset$,
\begin{equation}
\textstyle{
\lim_{T \to \infty} \m
 =  \PE^{{\cardgrad-1}}\Big[\int_0^{S_i} \sum\nolimits_{j=1}^{n}\mu_j \mathds{1}(X_j(s)>0) \,\rmd s\Big]
}\eqsp.
\end{equation}
\end{proposition}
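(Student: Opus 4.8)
My plan is to recast the discrete delay $\m=\PE[\M]$ as a functional of the continuous-time closed Jackson network of \Cref{prop:Jack}, and then to evaluate it in stationarity by means of the arrival theorem. The first ingredient is the dictionary between the two descriptions: as observed before \Cref{prop:Jack}, the CS steps are exactly the jump times $(T_l)$ of the total departure process $N(t)=\sum_{i=1}^n D_i(t)$, and $J_k$ is the node producing the $k$-th departure. Reading the definition $\M=\indiacc{i}(K_{k+1})\sum_{r=k}^{\nsteps}\indi{(\sum_{l=k}^{r}\indi{J_l=i})<X_{i,k}}$, the summand equals $1$ exactly while fewer than $X_{i,k}$ departures have occurred at node $i$ since step $k$; since node $i$ serves FIFO and the task dispatched at step $k$ (on the event $K_{k+1}=i$) sits at the tail, $\M$ counts the CS steps — equivalently, the total network departures — between the arrival of this tagged task at node $i$ and its own completion. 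Thus $\M$ is a sojourn functional: the number of network departures during the tagged task's stay at node $i$.

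Next I would take $\nsteps\to\infty$. Since the network is finite and closed, it is positive recurrent, so the tagged task completes after finitely many departures almost surely and the truncation at $\nsteps$ becomes inactive for large $\nsteps$. The partial sums over $r$ are nondecreasing in $\nsteps$, so monotone convergence moves the limit inside the expectation, identifying $\lim_{\nsteps\to\infty}\m$ with the expected number of network departures during the tagged task's sojourn, evaluated in the stationary regime fixed by \Cref{prop:Jack}.

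The core step is to compute this stationary expectation. I would invoke the arrival theorem for closed Jackson networks: at the instant the tagged task is routed to node $i$, the remaining $\cardgrad-1$ tasks are distributed according to the stationary law of the closed network with one fewer task, which is the source of the superscript $\cardgrad-1$ in $\PE^{\cardgrad-1}$; here the independence of the routing sequence $(R_l)$ from the service clocks $(\xi^i_l)$ is used. I would then write the departure count as $N$ evaluated at the completion time $S_i$ and apply the Dynkin/compensator formula for the jump Markov process: summing the generator of \Cref{prop:Jack} over all routings, the stochastic intensity of $N$ in state $x$ is $\sum_{i}\sum_{j}p_i\mu_j\mathds{1}(x_j>0)=\sum_{j=1}^{n}\mu_j\mathds{1}(x_j>0)$, so the expected number of departures up to the stopping time $S_i$ equals the expected integral of this intensity. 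Combined with the arrival-theorem initial law this yields $\PE^{\cardgrad-1}\big[\int_0^{S_i}\sum_{j=1}^{n}\mu_j\mathds{1}(X_j(s)>0)\,\rmd s\big]$.

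The step I expect to be the main obstacle is justifying the reduction to the $(\cardgrad-1)$-task network over the entire sojourn rather than merely at the arrival instant, where the arrival theorem applies directly. One must argue, using the memorylessness of the exponential services (residual service times remain exponential at rate $\mu_j$) and the Markov routing, that the intensity governing the counted departures along the sojourn is indeed that of the complementary $(\cardgrad-1)$-task dynamics, and that excluding the tagged task's own terminal departure is consistent with integrating $\sum_j\mu_j\mathds{1}(X_j(s)>0)$ up to $S_i$. The delicate bookkeeping is the FIFO discipline at node $i$ — the tagged task occupying the server during its own service — and I would resolve it with a removal/coupling argument identifying the departures counted by $\M$ with those of the $(\cardgrad-1)$-task network driven by the same service clocks.
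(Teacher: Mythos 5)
Your proposal follows essentially the same route as the paper's proof: identify $\M$ with the number of network departures during the tagged task's sojourn at node $i$, pass to the limit $\nsteps\to\infty$ by monotone convergence, apply the arrival theorem (MUSTA) to replace the stationary $\cardgrad$-task average by the Palm expectation $\PE^{\cardgrad-1}$, and convert the departure count into the time integral of the intensity $\sum_j \mu_j\mathds{1}(X_j(s)>0)$ via the stochastic intensity (compensator) formula. The subtlety you flag at the end is handled in the paper simply by noting that under the Palm probability only the time-zero law changes to $\pi_{\cardgrad-1}$ while the dynamics are unchanged, so no separate coupling argument is needed.
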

From now on, we use the notation $\mi {=} \lim_{T \to \infty} \m.$
This quantity is in general difficult to simplify further.
We consider in the sequel a specific regime in which we can obtain tractable expressions as the Jackson network gets close to saturation. We now describe how the queue length $X_i$, and $\mi$ depend on the agents speed and selection probability $\mathbf{p}$, under this saturated stationary regime similar to the Halfin-Whitt regime in queuing theory \citep{halfin1981heavy}.
\paragraph{Scaling regime}
We rely on scaling bounds to provide rules of thumb when certain traffic conditions are satisfied.
We follow the derivation of \cite{van2021scaling}, which considers closed Jackson networks under a particular load regime. We assume without loss of generality that $\theta_{n} = \max_{i \in [1, n]}(\theta_i)$; where $\theta_i = p_i/\mu_i$ (see \Cref{prop:Jack}). Due to the closed nature of the network, rescaling all  parameters through division by the maximum traffic load leads to a different normalization constant $\tilde H_C$, but otherwise has no effect on the stationary joint distribution:
\begin{equation}
\textstyle{
    \pi(x_1, \dots, x_{n-1}) = \tilde H_C^{-1} \prod\nolimits_{i=1}^{n-1} {\gamma_i}^{-x_i}
}\eqsp,
\end{equation}
where for all $i \in [1,n]$, $\gamma_i = \theta_{n} / \theta_{i}$.
We consider a scaling regime where all nodes are saturated, but at different rates. In \Cref{sec:app:3clusters} we also define a more complicated scenario where some queue lengths may degenerate to $0$.
\paragraph{2 clusters under saturation}
We consider two clusters of nodes of size $n_f$ and $n-n_f$, respectively. Nodes $i \in [1,n_f]$ are fast, the rest are slow.
We assume that nodes from the same cluster have the same speed $\mu_f, \mu_s$, for fast and slow nodes, respectively ($\theta_f  < \theta_s$).
This gives the \emph{scaled} intensity of the slow nodes $\gamma_s(\iota)=1$, and the fast nodes $\gamma_f(\iota):=\frac{\theta_s}{\theta_f}$, where $\iota$ is the scaling parameter and the scaling regime corresponds to choosing those values as $\gamma_f(\iota) = 1+ c_f \iota^{\alpha-1}$; with $c_f>0$ a fixed positive constant, and $\alpha \leq 1$, while the total number of tasks also scales as follows:
$$
\textstyle{
\beta \iota^{1-\alpha} = \cardgrad+1
}\eqsp.$$
Choosing $\alpha \leq 1$ as in \cite{van2021scaling} ensures that node  loads approach 1 as $\iota \rightarrow \infty$, enabling the application of Corollary 2 from \cite{van2021scaling}. This yields precise results on saturated node queue lengths at high traffic loads, while queue lengths for the remaining nodes are determined by population size constraints.
In this context, define $X_{i}^\iota$ as the stationary queue length for a scaling parameter $\iota$ and $\mi(\iota) $ the corresponding value of $\mi$.
\begin{proposition}[Corollary.2 in \cite{van2021scaling}]
\label{prop:cor2}
  In stationary regime, as the scaling parameter $\iota \to \infty$,
  \begin{equation}
  \textstyle{
       c_f \iota^{\alpha-1}         {X_{i}^\iota } \rightarrow_{d.}  \chi_i \eqsp,}
\end{equation}
where $\chi_i=\CPE{E_i}{\sum\nolimits_{j=1}^{n_f} E_j/c_f \leq \beta}
$, $i \in [1, n_f]$,
and the $(E_j)_{j\le n}$ are independent unit mean exponential distributions.
\end{proposition}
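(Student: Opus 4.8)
The plan is to reduce the statement to the explicit rescaled product-form law from \Cref{prop:Jack}, namely $\pi(x_1,\dots,x_{n-1}) \propto \prod_{i=1}^{n-1}\gamma_i^{-x_i}$ on the simplex $\{x:\sum_i x_i = \cardgrad+1\}$, and then carry out the two-cluster saturated scaling limit by hand. Since the statement is quoted as Corollary~2 of \cite{van2021scaling}, the real task is to check that the regime defined by $\gamma_s=1$, $\gamma_f=1+c_f\iota^{\alpha-1}$, and $\beta\iota^{1-\alpha}=\cardgrad+1$ meets the hypotheses of that corollary, and then to read off and interpret the limiting law; I would nonetheless reconstruct the self-contained argument so that the identification of $\chi_i$ is transparent.

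First I would marginalize over the slow cluster. Because every slow node carries scaled intensity $\gamma_s=1$, summing $\prod\gamma^{-x}$ over all slow-node configurations with fixed residual population $S=\cardgrad+1-\sum_{i\le n_f}x_i$ contributes only a combinatorial weight, namely the number of compositions of $S$ into $n-n_f$ nonnegative parts, $\binom{S+n-n_f-1}{\,n-n_f-1\,}$. Hence the marginal law of the fast vector $(x_1,\dots,x_{n_f})$ is proportional to $\gamma_f^{-\sum_{i\le n_f}x_i}$ times this count, normalized by $\tilde H_C$. The entire dependence on the fast queues is therefore through a geometric factor of rate $\log\gamma_f$ and through the population budget $S\ge 0$.

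Next I would insert the scaling. Writing $\epsilon:=c_f\iota^{\alpha-1}$ so that $\gamma_f=1+\epsilon$, the population scaling gives $\epsilon(\cardgrad+1)\to\beta c_f$. Substituting $x_i=u_i/\epsilon$ yields $\gamma_f^{-x_i}=(1+\epsilon)^{-u_i/\epsilon}\to e^{-u_i}$, so the fast weights converge to the product $\prod_i e^{-u_i}$ of independent unit-mean exponentials $E_i$, while the constraint $S\ge 0$ becomes
\[
\sum\nolimits_{i\le n_f} u_i \;\le\; \beta c_f \quad\Longleftrightarrow\quad \sum\nolimits_{j=1}^{n_f} E_j/c_f \le \beta\eqsp.
\]
A local-limit / Riemann-sum argument then turns the lattice sum into the corresponding Lebesgue integral and the normalization $\tilde H_C$ into that of the truncated exponential law; projecting onto coordinate $i$ identifies the weak limit of $c_f\iota^{\alpha-1}X_i^\iota$ with the law of $E_i$ conditioned on $\{\sum_j E_j/c_f\le\beta\}$, i.e. with $\chi_i$.

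The main obstacle is controlling this scaling limit at the level of the normalizing constant: one must show that no mass escapes and that the lattice sum defining $\tilde H_C$ converges to the finite integral of the limiting density, uniformly enough for weak convergence. This is delicate near the truncation boundary $\sum_{i\le n_f}u_i=\beta c_f$, where the residual population $S$ is no longer of order $1/\epsilon$ and the asymptotics of the composition count degrade, leaving a residual slow-cluster factor of order $(\beta c_f-\sum_i u_i)^{\,n-n_f-1}$ that must be reconciled with the clean form of $\chi_i$; it is precisely this accounting that the generating-function asymptotics behind Corollary~2 of \cite{van2021scaling} are designed to resolve, and for which the choice $\alpha\le 1$, forcing all node loads to $1$, is essential. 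The $\pm1$ shifts in the population size appearing between \Cref{prop:mik} and the scaling relation are asymptotically negligible and do not affect the limit.
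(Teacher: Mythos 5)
The paper does not actually prove this statement: it is imported wholesale from the cited reference (hence the bracketed attribution ``Corollary~2 in \cite{van2021scaling}''), and the only argument the paper supplies is the one-line check in the main text that choosing $\alpha\le 1$ forces all node loads to $1$ so that the corollary applies. Your attempt to reconstruct a self-contained derivation from the product form of \Cref{prop:Jack} is therefore a genuinely different — and more ambitious — route. The two main ingredients you use are sound: marginalizing the slow cluster (whose scaled intensities equal $1$) does reduce the fast marginal to a geometric weight $\gamma_f^{-\sum_{i\le n_f}x_i}$ times a composition count, and the substitution $x_i=u_i/\epsilon$ with $\epsilon=c_f\iota^{\alpha-1}$ correctly turns $(1+\epsilon)^{-u_i/\epsilon}$ into $e^{-u_i}$ and the population constraint into $\sum_{j\le n_f}E_j/c_f\le\beta$.

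However, there is a genuine gap, and you have in fact identified it yourself without closing it. The composition count $\binom{S+n-n_f-1}{n-n_f-1}$ with $S=\cardgrad-\sum_{i\le n_f}x_i\sim(\beta c_f-\sum_i u_i)/\epsilon$ contributes, after normalization, a factor $(\beta c_f-\sum_{i\le n_f}u_i)^{n-n_f-1}$ to the limiting joint density of the rescaled fast queue lengths. This factor does not disappear: your own computation therefore yields a limit law with density proportional to $\prod_{i\le n_f}e^{-u_i}\,(\beta c_f-\sum_i u_i)^{n-n_f-1}$ on the simplex, which is \emph{not} the law of $E_i$ conditioned on $\{\sum_{j=1}^{n_f}E_j/c_f\le\beta\}$ unless $n-n_f=1$. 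Asserting that the generating-function asymptotics of \cite{van2021scaling} ``are designed to resolve'' this is not a resolution — either the extra polynomial weight persists in the true limit (and the proposition as stated is only a paraphrase of Corollary~2; note that the paper's own computation of $\Gamma$ uses an $\mathrm{Erlang}(n_f,1)$ companion variable and lands on $P(n_f+2,c)/P(n_f+1,c)$, which is consistent with a conditioning event involving more than $n_f$ exponentials and hence hints that the clean form of $\chi_i$ is not the whole story), or a more careful treatment of the normalizing constant $\tilde H_C$ shows the weight cancels. Your proposal establishes neither, so the crucial step — the identification of the limit law — remains open; as it stands the argument proves a statement that differs from the one claimed.
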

As a consequence, using uniform integrability, we can estimate the following expected value (expected stationary queue lengths of fast, and slow nodes respectively) as follows:
\begin{equation}
    \begin{cases}
        \iota^{\alpha-1} \PE[X_i^{\iota}] \to \frac{\Gamma(c_f \beta)}{c_f}\eqsp, \quad \forall i \in [1,n_f],\\
       \iota^{\alpha-1}  \PE[X_i^{\iota}] \to \frac{1}{n-n_f} \left (\beta - n_f\frac{1}{c_f} \Gamma(c_f \beta) \right ), \forall i \in [n_f+1,n]\eqsp.
    \end{cases}
\end{equation}
Denoting  by  $P(k,x)= 1- \sum_{i=0}^{k-1} e^{-x} \frac{x^i}{i!}$, we have:
 $$\Gamma(c) =\frac{\PP(\sum\nolimits_{j=1}^{n_f+2} E_j\leq c)}{\PP(\sum\nolimits_{j=1}^{n_f+1} E_j \leq c)}\\
  =\frac{P(n_f+2, c)}{P(n_f+1, c)} \eqsp.
 $$
We now turn to bound the key quantity $\mi(\iota)$ for large $\iota$.
\begin{proposition}\label{prop:convSR}
Using the same assumptions as those of \Cref{prop:cor2} we get that :
\begin{equation}
   \begin{cases}
\limsup_{\iota \to \infty}   \iota^{\alpha-1} \mu_f \mi(\iota) &\le {\lambda} \frac{\Gamma(c_f \beta)}{c_f}\eqsp,  \quad \forall i \in [1,n_f] \eqsp,\\
  \limsup_{\iota \to \infty}   \iota^{\alpha-1} \mu_s \mi(\iota) &\le {\lambda} \frac{\beta - n_f \Gamma(c_f \beta)/c_f}{n-n_f}\eqsp,  \text{otherwise},
   \end{cases}
\end{equation}
where $\lambda= \sum_{i=1}^n \mu_i$.
\end{proposition}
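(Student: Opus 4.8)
The entry point is the stationary integral representation in \Cref{prop:mik}, namely $\mi(\iota)=\PE^{\cardgrad-1}\bigl[\int_0^{S_i}\sum_{j=1}^n\mu_j\mathds{1}(X_j(s)>0)\,\rmd s\bigr]$. The integrand is exactly the instantaneous total departure rate of the network, i.e.\ the rate at which CS steps occur, so $\mi(\iota)$ is the expected number of server steps elapsing during a single sojourn of a tagged task at node $i$. The first and decisive step is the pathwise bound $\sum_{j=1}^n\mu_j\mathds{1}(X_j(s)>0)\le\sum_{j=1}^n\mu_j=\lambda$, which holds on every trajectory and yields $\mi(\iota)\le\lambda\,\PE^{\cardgrad-1}[S_i]$. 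Discarding the indicators here is what makes the final statement an upper bound, hence the $\limsup$ and the $\le$.

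Next I would identify $\PE^{\cardgrad-1}[S_i]$ as the stationary mean sojourn time of a task at node $i$; the superscript $\cardgrad-1$ is the arrival-theorem bookkeeping, reflecting that a task dispatched to node $i$ sees the remaining $\cardgrad-1$ tasks distributed according to the $(\cardgrad-1)$-stationary law. Little's law applied at node $i$ then gives $\PE[X_i]=\Lambda_i\,\PE[S_i]$ with throughput $\Lambda_i=\mu_i\PP(X_i>0)$ (node $i$ completes services at rate $\mu_i$ whenever it is busy). Combining, $\mu_i\,\PE^{\cardgrad-1}[S_i]=\PE[X_i]/\PP(X_i>0)$, so that $\mu_i\,\mi(\iota)\le\lambda\,\PE[X_i]/\PP(X_i>0)$.

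It remains to pass to the scaling limit. Under the regime of \Cref{prop:cor2} with $\alpha\le1$, every node's load tends to $1$ and the scaled queue lengths $c_f\iota^{\alpha-1}X_i^\iota$ converge in distribution to the strictly positive limit $\chi_i$; hence $X_i^\iota\to\infty$ and $\PP(X_i^\iota>0)\to1$ for each node $i$. Multiplying the previous inequality by $\iota^{\alpha-1}$, taking $\limsup_{\iota\to\infty}$, and using that the denominator tends to $1$, I obtain $\limsup_\iota\iota^{\alpha-1}\mu_i\,\mi(\iota)\le\lambda\lim_\iota\iota^{\alpha-1}\PE[X_i^\iota]$. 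Substituting the scaled-expectation limits recorded right after \Cref{prop:cor2}---$\iota^{\alpha-1}\PE[X_i^\iota]\to\Gamma(c_f\beta)/c_f$ for $i\in[1,n_f]$ and $\iota^{\alpha-1}\PE[X_i^\iota]\to(\beta-n_f\Gamma(c_f\beta)/c_f)/(n-n_f)$ otherwise---gives the two claimed inequalities.

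The main obstacle is the rigor of the middle paragraph. First, Little's law must be justified in the closed, finite-population setting, and its product with the arrival-theorem interpretation must be reconciled with the bookkeeping linking the $\PE^{\cardgrad-1}$ integral, the $\cardgrad$-task mean sojourn, and the stationary $\PE[X_i^\iota]$ of \Cref{prop:cor2} (in particular keeping track of the convention $\beta\iota^{1-\alpha}=\cardgrad+1$). Second, passing the $\limsup$ through the quotient $\PE[X_i]/\PP(X_i>0)$ requires controlling $\PP(X_i^\iota>0)\to1$ for every node, including the slow ones whose scaled queue lengths are pinned by the population constraint, together with the uniform integrability already invoked to obtain the scaled-expectation limits. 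The bound by $\lambda$ in the first step is deliberately lossy, which is exactly why only upper bounds are asserted.
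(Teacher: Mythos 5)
Your argument is correct and shares its skeleton with the paper's proof: both start from the representation of \Cref{prop:mik}, bound the integrand pathwise by $\lambda=\sum_j\mu_j$ to get $\mi(\iota)\le\lambda\,\PE^{\cardgrad-1}[S_i]$, convert the mean sojourn time into a mean queue length, and then feed in the scaled-expectation limits following \Cref{prop:cor2}. The one place you diverge is the middle conversion. The paper does it with a single FIFO/memorylessness identity: under the Palm probability the tagged task finds $X_i$ tasks in front of it (distributed as the stationary law with $\cardgrad-1$ tasks, by the arrival theorem), and since services are exponential and FIFO its sojourn is a sum of $X_i+1$ i.i.d.\ $\operatorname{Exp}(\mu_i)$ durations, giving $\PE^{\cardgrad-1}[S_i]=\frac{1}{\mu_i}\bigl(\PE^{\cardgrad-1}[X_i^\iota]+1\bigr)$; the $+1$ is killed by the factor $\iota^{\alpha-1}$ in the limit. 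You instead invoke Little's law, $\mu_i\,\PE[S_i]=\PE[X_i]/\PP(X_i>0)$, which is also valid here (the two identities are consistent, reflecting the standard relation between the $\cardgrad$- and $(\cardgrad-1)$-task stationary laws), but it buys you two extra obligations the paper never incurs: you must show $\PP(X_i^\iota>0)\to1$ for every node, including the slow ones pinned by the population constraint, and you must reconcile the time-stationary $\PE^{\cardgrad}[X_i]$ appearing in Little's law with the $(\cardgrad-1)$-task quantities and the convention $\beta\iota^{1-\alpha}=\cardgrad+1$. You flag both issues honestly, and both are resolvable in the saturated regime (the scaled queue lengths converge to a.s.\ positive limits, so $X_i^\iota\to\infty$ and the population shift is absorbed in the limit), but the paper's FIFO identity is exact at finite $\iota$, stays entirely under $\PE^{\cardgrad-1}$, and makes the whole middle step a one-liner. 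If you rewrite, I would replace the Little's-law paragraph with that identity and keep the rest as is.
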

We expect these bounds to be sharp for large $\iota$.

\paragraph{Numerical example} Under the previous assumptions, we have $\lambda=n_f \mu_f + (n-n_f) \mu_s$. We will further assume $n_f=\frac{n}{2}$, and $p_i=\frac{1}{n}$. Under these conditions, we have that $\Gamma(c_f \beta)$ is close to $1$. We can give a closed form approximations of the bounds of the expected delays:
\begin{equation}
    \begin{cases}
          \mi(\iota) \le \frac{n(\mu_f+\mu_s)}{2\mu_f(\mu_f/\mu_s -1)}\eqsp, \quad \forall i \in [1, n_f]\eqsp,\\
      \mi(\iota) \le \big(\frac{2\cardgrad}{n} - \frac{1}{\mu_f/\mu_s -1}\big)\frac{n(\mu_f+\mu_s)}{2\mu_s}\eqsp,  \quad \forall i \in [n_f +1, n].
    \end{cases}
\end{equation}
All delays bounds estimations have a closed form in the 2-cluster saturated regime: they only depend on the number of tasks in the network $\cardgrad$, on the number of nodes $n$, and on the intensity of nodes $\mu_f, \mu_s$. More details on the derivations, and on the following experiment are available in \Cref{sec:app:2clusters}. We consider a numerical simulation with $n=10$ clients, split in two clusters of same size: fast nodes with rate $\mu_f = 1.2$, and slow nodes with rates $\mu_s = 1$. We saturate the network with $\cardgrad = 1000$ tasks, and we simulate up to $\nsteps = 10^6$ server steps, and plot the distribution of the delays (in number of server steps).
\begin{figure}
    \centering
\includegraphics[scale=.5]{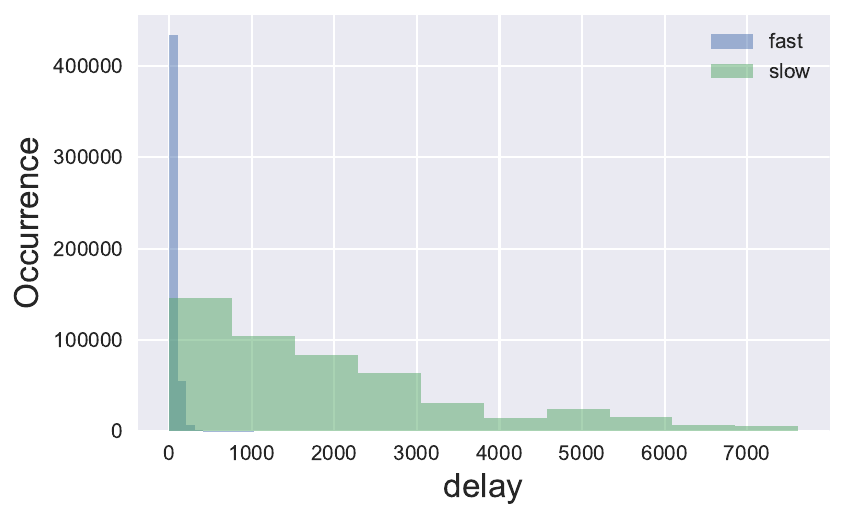}
    \caption{Histogram of fast and slow delays (in number of server steps) for a uniform sampling scheme.}
\label{fig:simu_delays2clusters}
\end{figure}
Our numerical experiment in \Cref{fig:simu_delays2clusters} gives average delays ($50$ and $1950$ for fast and slow nodes, respectively) and queue lengths that correspond to the theoretical expected values. It is also important to point out that the average delays are way smaller than the maximum delay experienced in the $\nsteps = 10^6$ steps. This further highlights the necessity to switch from analysis that depend on the $\tau_{max}$ quantity, to our analysis that only depends on the expected delays.

\section{Deep learning experiments}
\label{sec:experiments}
We evaluate FL algos performance on a classic image classification task: CIFAR-10 \citep{Krizhevsky2009}. We consider a non-\iid\ split of the dataset: each client takes seven classes (out of the ten possible) without replacement. This process introduces heterogeneity among the clients.

We compare different asynchronous methods in terms of CS steps. In all experiments, we track the performance of each algorithm by evaluating the server model against an unseen validation dataset.

We decide to focus on nodes with different exponential service rates as in \citet{nguyen2022federated}. We build \texttt{AsyncSGD} and \Algo\ codes from scratch. After simulating $n$ clients, we randomly group them into fast or slow nodes. We assume that the clients have different computational speeds, and refer the readers to \Cref{sec:app:simuruntime} for further details. We have assumed that half of the clients are slow. We compare the classic asynchronous methods \texttt{FedBuff} \citep{nguyen2022federated}, and \texttt{AsyncSGD} \citep{koloskova2022sharper}. Details about concurrent works implementation can be found in \Cref{sec:app:concu_works}.

We use the standard data augmentations and normalizations for all methods. All methods are implemented in Pytorch, and experiments are performed on an NVIDIA Tesla-P100 GPU. Standard multiclass cross entropy loss is used for all experiments. All models are fine-tuned with $n=100$ clients, and a batch of size $128$. We have finetuned the learning rate for each method. For \texttt{FedBuff} we tried several values for the buffer size, but finally found that the default one $Z=10$ gives the best performances.

In \Cref{fig:deep_cifar}, we compare the performance of a Resnet20 \citep{he2016deep} with the CIFAR-10 dataset, which consists of 50000 training images and 10000 test images (in 10 classes). The total number of CS steps is set to $200$. 
\begin{figure}
    \centering
    \includegraphics[scale=.5]{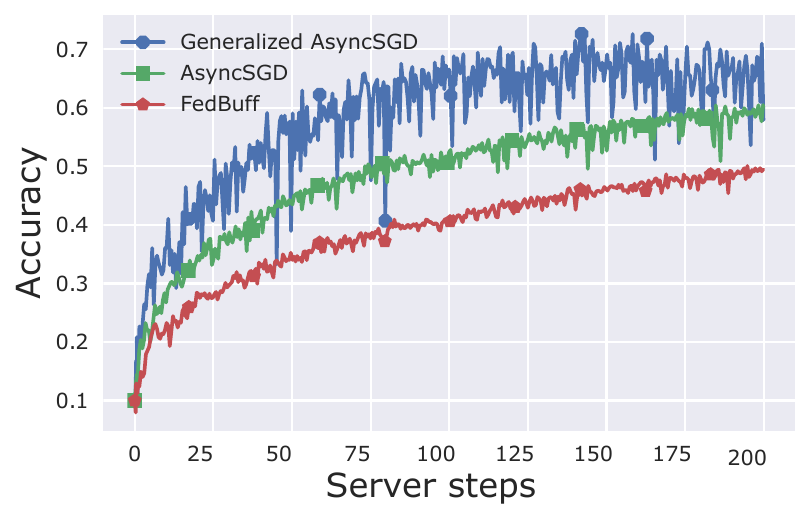}
    \caption{Accuracy on validation dataset on central server, for CIFAR-10 classification task.}
    \label{fig:deep_cifar}
\end{figure}
Despite the heterogeneity between client datasets, we can achieve good performance on image classification. \texttt{FedBuff} has to fill up its buffer before performing an update, slowing down the training process. \texttt{AsyncSGD} provides acceptable performance, but we can go further by sampling fast nodes slightly less than the uniform (as suggested in \Cref{sec:optim_bounds}), and this leads to much better accuracy.

We have additionally tested \Algo\ on the TinyImageNet classification task \citep{le2015tiny}, with a ResNet18. We compare \Algo\ with the classic synchronous approach FedAvg \citep{mcmahan2017communication} and two newer asynchronous methods FedBuff \citep{nguyen2022federated} and FAVANO \citep{leconte2023favas}. FAVANO (and the NN quantized QuAFL \cite{zakerinia2022quafl}) follows a completely different method than we do. There are no queues: Clients are triggered at the CS and either withhold their results or are interrupted by the CS before the work is completed. In FAVANO, the clients can have a high latency. The update rate of the CS is limited by (slow) clients: The minimum time between two CS updates should be at least as long as the minimum time needed to process a gradient update. TinyImageNet has $200$ classes and each class has $500$ (RGB) training images, $50$ validation images and $50$ test images. To train ResNet18, we follow the usual practices for training NNs: we resize the
input images to $64 \times 64$ and then randomly flip them horizontally during training. During testing, we center-crop them to the appropriate size. The learning rate is set to $0.001$ and the total simulated time is set to $1000$.
\begin{figure}
    \centering
    \includegraphics[scale=.5]{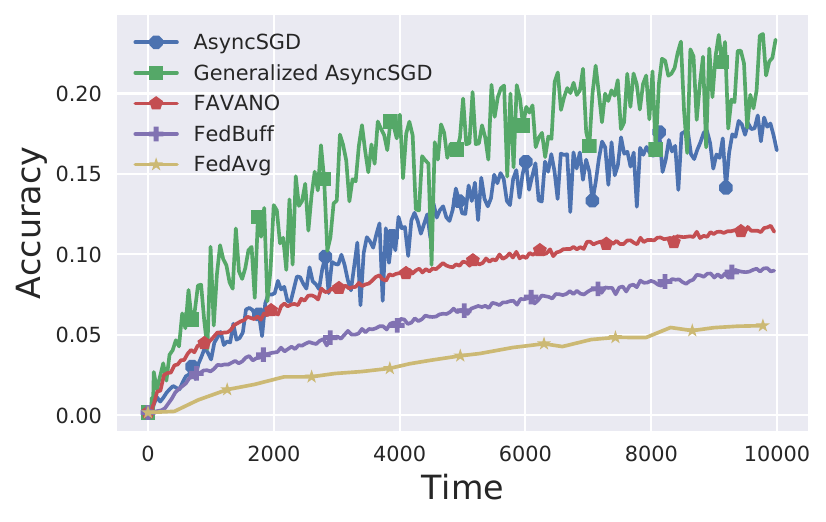}
    \caption{Test accuracy on TinyImageNet dataset with $n = 100$ total nodes.}
    \label{fig:tiny}
\end{figure}
\Cref{fig:tiny} illustrates the performance of \Algo\ in this experimental setup. While the partitioning of the training dataset follows an IID strategy,
TinyImageNet provides enough diversity to challenge federated learning algorithms. FedBuff is efficient when the number of stragglers is small. However, FedBuff is sensitive to the fraction of slow clients and may get stuck if the majority of clients in the buffer are more frequently the fast clients: this introduces a bias and few information from slow clients will be taken into account at the CS. FAVANO works better than FedBuff, but the CS updates should not be too small in order to allow slow clients to compute at least one local gradient step. However with AsyncSGD, no constraints are set on the time between two consecutive CS steps: it evolves freely based on the queuing processes. \Algo\ presents the same advantage, and in addition, samples clients with an optimal scheme. This leads to better performance, even on the challenging TinyImageNet benchmark.:

\section{Conclusion}
In this study, we analyze the convergence of an Asynchronous Federated Learning mechanism in a heterogeneous environment. Through a detailed queuing dynamics analysis, we demonstrate significantly improved convergence rates for our algorithm $\Algo$, eliminating dependence on the maximum delay $\tau_{\text{max}}$ seen in previous works. Our algorithm enables non-uniform node sampling, enhancing flexibility. Empirical evaluations reveal $\Algo$ superior efficiency over both synchronous and asynchronous state-of-the-art methods in standard CNN training benchmarks for image classification tasks.

\clearpage
\newpage
\bibliographystyle{apalike}
\bibliography{biblio}

\clearpage
\newpage
\appendix
\onecolumn
\section{Environmental footprint}
\label{sec:app_complexity}
In the current context, we estimated the carbon footprint of our experiments to be about 1 kg CO2e (calculated using green-algorithms.org v2.1 \cite{lannelongue2021green}). \Algo's time and memory complexity is on par with concurrent methods.

\section{Notations and definitions}
In the proof section below we will refer on the following notations. For $0 \leq k \leq \nsteps$ consider the filtration $\F_k$ defined as $\F_k = \sigma(\{w_\ell, \ell \leq k, K_m, m < k\}) $.
We define the \emph{virtual iterates} $\mu_k$ as follows:
\begin{equation}
\label{eq:virtual_iters_def}
\begin{cases}
    \mu_0 = \param_0,\\
    \mu_1 = \mu_0 -\eta \sum_{i \in \Set_0} \frac{1}{np_i}\Tilde{g}_{i}(\param_{0}),\\
    \mu_{{k+1}} = \mu_{k} - \frac{\eta}{np_{K_k}} \Tilde{g}_{K_k}(\param_{k})\eqsp, \quad k \geq 1\eqsp.
\end{cases}
\end{equation}

\section{Proofs of \Cref{sec:optim_bounds}}
\label{sec:app:proofs}
We split the proof of \Cref{theorem:cvgquant} into several steps. First we bound the quantity of interest $\sum\nolimits_{k=0}^{\nsteps} \PE[\norm{\nabla f(w_k)}^2]$ in terms of norms of difference between exact and virtual iterations $\norm{\mu_k - w_k}^2$ defined in \eqref{eq:virtual_iters_def}. More precisely, the following statement holds:

\begin{lemma}
\label{lem:technical-1}
Assume \Cref{assum:uniflowerbound} to \Cref{assum:graddissim} and let the learning rate $\eta$ satisfy $\eta \leq \frac{n^2}{8L\sum\nolimits_{i=1}^n\frac{1}{p_i}}$. Then for the iterates $(w_k)_{k \geq 0}$ of \Algo\ it holds that
\begin{equation}
\label{eq:virtual_iters_bound}
\frac{\eta}{4(\nsteps + 1)} \sum\nolimits_{k=0}^{\nsteps} \PE[\norm{\nabla f(w_k)}^2] \leq \frac{f(\mu_0) - \PE[f(\mu_{\nsteps+1})]}{\nsteps+1} + \frac{\eta L^2}{2} \frac{1}{\nsteps+1} \sum\nolimits_{k=0}^{\nsteps} \PE[\Vert \mu_k - w_k \Vert^2] + {\eta^2L}\sum\nolimits_{i=1}^n \frac{2{G}^2+\sigma^2}{n^2p_i}.
\end{equation}
\end{lemma}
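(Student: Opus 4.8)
The plan is to run a standard virtual-iterate descent analysis, exploiting the fact that the virtual sequence $\mu_k$ evolves by an \emph{unbiased, importance-weighted} stochastic gradient step whose gradient is evaluated at the true (delayed) iterate $w_k$. Since each $\nabla f_i$ is $L$-Lipschitz by \Cref{assum:smoothgradients}, the average $\nabla f$ is $L$-Lipschitz as well, so $f$ is $L$-smooth. Applying the smoothness descent inequality to the virtual step $\mu_{k+1} = \mu_k - \frac{\eta}{np_{K_k}}\Tilde{g}_{K_k}(w_k)$ gives
\[
f(\mu_{k+1}) \le f(\mu_k) - \frac{\eta}{np_{K_k}}\ps{\nabla f(\mu_k)}{\Tilde{g}_{K_k}(w_k)} + \frac{L\eta^2}{2n^2 p_{K_k}^2}\norm{\Tilde{g}_{K_k}(w_k)}^2\eqsp.
\]

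The crucial step is to condition on $\F_k$. Since $w_k$ is $\F_k$-measurable while the selection $K_k$ (drawn with probability $p_i$) and the fresh gradient noise are not, the importance weight $1/(np_{K_k})$ exactly debiases the estimator,
\[
\CPE{\frac{1}{np_{K_k}}\Tilde{g}_{K_k}(w_k)}{\F_k} = \frac{1}{n}\sum_{i=1}^n \nabla f_i(w_k) = \nabla f(w_k)\eqsp,
\]
so the linear term becomes $-\eta\ps{\nabla f(\mu_k)}{\nabla f(w_k)}$, while the quadratic term produces the scaling $\sum_{i=1}^n 1/(n^2 p_i)$ in front of the conditional second moments. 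I would then bound the second moment by combining \Cref{assum:boundedvariance} and \Cref{assum:graddissim} through $\norm{\nabla f_i(w_k)}^2 \le 2\norm{\nabla f(w_k)}^2 + 2G^2$, giving $\CPE{\norm{\Tilde{g}_i(w_k)}^2}{\F_k} \le 2\norm{\nabla f(w_k)}^2 + B$ with $B = 2G^2+\sigma^2$. For the cross term I would use the polarization identity together with $L$-smoothness of $\nabla f$,
\[
-\ps{\nabla f(\mu_k)}{\nabla f(w_k)} \le \frac{1}{2}\big(L^2\norm{\mu_k - w_k}^2 - \norm{\nabla f(w_k)}^2\big)\eqsp,
\]
which is precisely where the virtual-vs-true gap $\norm{\mu_k - w_k}^2$ enters (this gap is what a later lemma controls via the queuing dynamics, and is simply carried along here).

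Collecting these, the net coefficient of $\norm{\nabla f(w_k)}^2$ becomes $-\eta/2 + L\eta^2\sum_{i} 1/(n^2 p_i)$; the hypothesis $\eta \le n^2/(8L\sum_i 1/p_i)$ forces $L\eta\sum_i 1/(n^2 p_i) \le 1/8$, so this coefficient is at most $-3\eta/8 \le -\eta/4$. Taking total expectations, telescoping $\PE[f(\mu_k)]-\PE[f(\mu_{k+1})]$ over $k=0,\dots,\nsteps$, and dividing by $\nsteps+1$ yields the claimed inequality. I do not expect a genuine conceptual obstacle in this lemma: it is a textbook descent argument, and the only points requiring care are (i) getting the filtration right so that $K_k$ and the stochastic-gradient noise average out cleanly while $\nabla f(w_k)$ survives in the linear term, and (ii) the constant bookkeeping that makes the learning-rate threshold land so the net $\norm{\nabla f(w_k)}^2$ coefficient is absorbed into $-\eta/4$. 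The genuinely hard part of the overall theorem—bounding $\sum_k\PE[\norm{\mu_k - w_k}^2]$ through the closed Jackson network—is deferred to a separate step and does not appear here.
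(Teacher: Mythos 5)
Your proposal is correct and follows essentially the same route as the paper's proof: smoothness descent on the virtual iterate, debiasing via the importance weight $1/(np_{K_k})$ conditionally on $\F_k$, the polarization identity plus $L$-smoothness to introduce $\norm{\mu_k - w_k}^2$, the variance and dissimilarity assumptions to bound the second moment, and the step-size condition to absorb the $\norm{\nabla f(w_k)}^2$ coefficient into $-\eta/4$ before telescoping. The only difference is bookkeeping — your direct bound $\CPE{\norm{\Tilde{g}_i(w_k)}^2}{\F_k} \le 2\norm{\nabla f(w_k)}^2 + B$ yields a slightly smaller noise constant than the paper's two-stage splitting, which still implies the stated inequality.
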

\begin{proof}
Using the smoothness assumption \Cref{assum:smoothgradients} and the definition of $\mu_{k+1}$ from \eqref{eq:virtual_iters_def}, we obtain the following descent inequality:
\begin{align}
\CPE{f\left(\mu_{k+1}\right)}{\F_k} - f\left(\mu_{k}\right) & \leq - \eta \CPE{\langle \nabla f (\mu_k), \frac{1}{np_{K_k}} \widetilde{g}_{K_{k}}(w_k) \rangle}{\F_k} + \frac{\eta^2L}{2} \CPE{\Vert \frac{1}{np_{K_k}} \widetilde{g}_{K_{k}}(w_k) \Vert^2}{\F_k}\eqsp.
\end{align}
With the unbiasedness property of stochastic gradients and \Cref{assum:boundedvariance}, we get
\begin{align}
    \CPE{f\left(\mu_{k+1}\right)}{\F_k} - f\left(\mu_{k}\right) & \leq - \eta \CPE{\langle \nabla f (\mu_k), \frac{1}{np_{K_k}}\nabla f_{K_{k}}(w_k) \rangle}{\F_k} + \eta^2L \CPE{\Vert \frac{1}{np_{K_k}} \nabla f_{K_{k}}(w_k) \Vert^2}{\F_k} + \eta^2L\sigma^2\sum\nolimits_{i=1}^n\frac{1}{n^2p_i}\\
    & = - \eta \langle \nabla f (\mu_k), \nabla f(w_k) \rangle + \eta^2L \CPE{\Vert\frac{1}{np_{K_k}} \nabla f_{K_{k}}(w_k) \Vert^2}{\F_k}+ \eta^2L\sigma^2 \sum\nolimits_{i=1}^n\frac{1}{n^2p_i}\eqsp.
\end{align}
In the last equality we used that $\CPE{\langle \nabla f (\mu_k), \frac{1}{np_{K_k}}\nabla f_{K_{k}}(w_k) \rangle}{\F_k} = \langle \nabla f (\mu_k), \nabla f (w_k) \rangle$. Now we introduce a notation
\[
\Delta_k = \CPE{f\left(\mu_{k+1}\right)}{\F_k} - f\left(\mu_{k}\right)\eqsp.
\]
Since $\langle a, b \rangle = 1/2 (\norm{a}^2 + \norm{b}^2 - \norm{a-b}^2)$ for any $a,b \in \rset^{d}$, we get that
\begin{align}
    \Delta_k & \leq - \frac{\eta}{2} (\Vert \nabla f (\mu_k) \Vert^2 + \Vert \nabla f(w_k) \Vert^2 -\Vert \nabla f(w_k)-\nabla f(\mu_k) \Vert^2) + \eta^2L \CPE{\Vert \frac{1}{np_{K_k}} \nabla f_{K_{k}}(w_k) \Vert^2}{\F_k}+ \eta^2L\sigma^2\sum\nolimits_{i=1}^n\frac{1}{n^2p_i}\\
    & \leq -\frac{\eta}{2} \Vert \nabla f (\param_k) \Vert^2 +\frac{\eta}{2} L^2 \Vert \mu_k - \param_k \Vert^2 +\eta^2L \CPE{\Vert\frac{1}{np_{K_k}}( \nabla f_{K_{k}}(w_k)-\nabla f(w_k)+ \nabla f(w_k)) \Vert^2}{\F_k} + \eta^2L\sigma^2\sum\nolimits_{i=1}^n\frac{1}{n^2p_i}\\
    & \leq -\frac{\eta}{2} \Vert \nabla f (\param_k) \Vert^2 +\frac{\eta}{2} L^2 \Vert \mu_k - \param_k \Vert^2 +2\eta^2L(\CPE{\Vert\frac{1}{np_{K_k}}( \nabla f_{K_{k}}(w_k)-\nabla f(w_k)) \Vert^2}{\F_k} + \Vert \nabla f(w_k) \Vert^2\sum\nolimits_{i=1}^n\frac{1}{n^2p_i})\\
    &\qquad +\eta^2L\sigma^2\sum\nolimits_{i=1}^n\frac{1}{n^2p_i}.
\end{align}
Applying the bounded gradient dissimilarity assumption \Cref{assum:graddissim}, we get
\begin{align}
    \Delta_k & \leq -\frac{\eta}{2} \Vert \nabla f (\param_k) \Vert^2 +\frac{\eta}{2} L^2 \Vert \mu_k - \param_k \Vert^2 +{2\eta^2L}(\sum\nolimits_{i=1}^n \frac{{G}^2}{n^2p_i} + \Vert \nabla f(w_k) \Vert^2\sum\nolimits_{i=1}^n\frac{1}{n^2p_i})+ \eta^2L\sigma^2\sum\nolimits_{i=1}^n\frac{1}{n^2p_i} \\
    & \leq (-\frac{\eta}{2} + 2\eta^2L\sum\nolimits_{i=1}^n\frac{1}{n^2p_i}) \Vert \nabla f (\param_k) \Vert^2 +\frac{\eta}{2} L^2 \Vert \mu_k - \param_k \Vert^2 + {2\eta^2L}\sum\nolimits_{i=1}^n \frac{{G}^2}{n^2p_i}+ \eta^2L\sigma^2\sum\nolimits_{i=1}^n\frac{1}{n^2p_i}.
\end{align}
As a consequence by taking $\eta \leq \frac{n^2}{8L\sum\nolimits_{i=1}^n\frac{1}{p_i}}$ and substituting for $\Delta_k$, we get
\begin{align}
    \frac{\eta}{4} \Vert \nabla f(w_k) \Vert ^2 \leq f(\mu_k) - \CPE{f(\mu_{k+1})}{\mathcal{F}_k} + \frac{\eta L^2}{2} \Vert \mu_k - w_k \Vert^2 + {2\eta^2L}\sum\nolimits_{i=1}^n \frac{{G}^2}{n^2p_i}+ \eta^2L\sigma^2\sum\nolimits_{i=1}^n\frac{1}{n^2p_i}.
\end{align}
Now taking sum for $k \in \{0,\ldots, \nsteps\}$, we get
\begin{align}
\frac{\eta}{4(\nsteps+1)} \sum\nolimits_{k=0}^{\nsteps} \PE[\Vert \nabla f(w_k) \Vert ^2] \leq \frac{f(\mu_0) - \PE[f(\mu_{K+1})]}{\nsteps+1} + \frac{\eta L^2}{2} \frac{1}{\nsteps+1} \sum\nolimits_{k=0}^{\nsteps} \PE[\Vert \mu_k - w_k \Vert^2] + {\eta^2L}\sum\nolimits_{i=1}^n \frac{2{G}^2+\sigma^2}{n^2p_i}.
\end{align}

\end{proof}

In order to apply the result of \Cref{lem:technical-1}, one needs to provide an upper bound on the correction term $\PE[\norm{\mu_k - \param_k}^2]$.
As explained in \Cref{sec:optim_bounds}, the virtual iterates deviation from the true $\{ \param_k \}_{k>0}$ is made of all the gradients computed (on potentially outdated $\param$'s) and not applied yet. We can introduce the sets $\{ \mathcal{I}_k \}_{k>0}$, as the sets of time and client indexes whose gradients are still on fly at time $k$. With $\Set_0$ being the set of initial active workers from \Algo, there are defined by the recursion:
\begin{equation}
\mathcal{I}_1 = \{ (i,0) | i \in \Set_0, i \neq J_0 \}
\end{equation}
\begin{equation}
\mathcal{I}_{k+1}=
\begin{cases}
     \mathcal{I}_k \quad \text{if } I_k=k,\\
     \mathcal{I}_k \setminus (J_k, I_k)\cup (K_k, k) \quad \text{otherwise}.
\end{cases}
\end{equation}
As $\norm{\mu_k - \param_k}$ represents the norm of gradients, it is easier to introduce the sets $\{ \mathcal{G}_k \}_{k>0}$, as the set of gradients \emph{scaled} with their respective weight $\frac{-1}{np_i}$ for each client $i$, that correspond to the indexes in $\{ \mathcal{I}_k \}_{k>0}$:
\begin{equation}
    \mathcal{G}_k  = \{ -\frac{1}{np_i}\Tilde{g}_i(\param_j) | (i,j) \in \mathcal{I}_k \}.
\end{equation}
In the following lines, we will show that the sets $\{ \mathcal{G}_k \}_{k>0}$ (and as a consequence the sets $\{ \mathcal{I}_k \}_{k>0}$) have a constant cardinal: the number of running tasks in \Algo\ is fixed during the whole optimization process, and only depends on the initialization.
\begin{remark}
    The number of running tasks is constant, but the number of active nodes is not! If there is a very slow client $i$, the number of active clients can be reduced to $1$: all tasks are currently processed in the queue of client $i$.
\end{remark}

\begin{lemma}
\label{lem:technical-lemmma-2}
For the sequence $(w_k)_{k \geq 0}$ of updates produced by \Algo\ and for the sequence of virtual updates $(\mu_k)_{k \geq 0}$ defined in \eqref{eq:virtual_iters_def}, it holds that
\begin{align}
\mu_1 - \param_1   &= -\eta \sum\nolimits_{i \in \Set_0} \indin{i \neq J_0}\frac{1}{np_i}\Tilde{g}_i(\param_0)\eqsp, \\
\mu_{k+1} - \param_{k+1} & =  -\eta \sum\nolimits_{i \in \Set_0} \indin{i \neq J_0} \frac{1}{np_{i}}\Tilde{g}_i(\param_0) +\eta  \sum\nolimits_{r=1}^k (\frac{1}{np_{J_r}}\Tilde{g}_{J_r}(w_{I_r}) - \frac{1}{np_{K_r}}\Tilde{g}_{K_{r}}(w_r))\eqsp, \quad k \geq 1\eqsp.
\end{align}
\end{lemma}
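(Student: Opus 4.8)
The plan is to prove both identities by a direct, \emph{pathwise} telescoping argument: setting $D_k := \mu_k - \param_k$, this difference evolves according to the difference of the two recursions (the virtual one defining $\mu_k$ and the algorithmic one defining $\param_k$), so I only need a clean base case together with a one-step increment, and then sum. No probabilistic input is required here; the identity holds for every realization of the sequences $(J_r)$, $(K_r)$, $(I_r)$.

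First I would establish the base case $D_1 = \mu_1 - \param_1$. Substituting $\mu_0 = \param_0$ into the definition of $\mu_1$ gives $\mu_1 = \param_0 - \eta \sum_{i \in \Set_0} \frac{1}{np_i}\Tilde{g}_i(\param_0)$. For the true iterate, the $k=0$ update of \Algo\ reads $\param_1 = \param_0 - \frac{\eta}{np_{J_0}}\Tilde{g}_{J_0}(\param_{I_0})$. At initialization all $\cardgrad$ tasks of $\Set_0$ are dispatched at step $0$ and computed on $\param_0$, so the completed task at step $0$ satisfies $J_0 \in \Set_0$ and $\param_{I_0} = \param_0$. Subtracting, the single term $\frac{\eta}{np_{J_0}}\Tilde{g}_{J_0}(\param_0)$ cancels the $i = J_0$ summand in $\mu_1$, leaving $D_1 = -\eta \sum_{i \in \Set_0} \indin{i \neq J_0}\frac{1}{np_i}\Tilde{g}_i(\param_0)$, which is the first identity.

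Next I would compute the increment $D_{k+1} - D_k$ for $k \geq 1$. From the recursion $\mu_{k+1} = \mu_k - \frac{\eta}{np_{K_k}}\Tilde{g}_{K_k}(\param_k)$ and the algorithmic update $\param_{k+1} = \param_k - \frac{\eta}{np_{J_k}}\Tilde{g}_{J_k}(\param_{I_k})$, subtraction yields
\begin{equation}
D_{k+1} - D_k = \eta\Bigl(\frac{1}{np_{J_k}}\Tilde{g}_{J_k}(\param_{I_k}) - \frac{1}{np_{K_k}}\Tilde{g}_{K_k}(\param_k)\Bigr)\eqsp.
\end{equation}
Telescoping this identity over $r = 1, \dots, k$ and adding $D_1$ gives $D_{k+1} = D_1 + \eta \sum_{r=1}^k \bigl(\frac{1}{np_{J_r}}\Tilde{g}_{J_r}(\param_{I_r}) - \frac{1}{np_{K_r}}\Tilde{g}_{K_r}(\param_r)\bigr)$, and substituting the expression for $D_1$ from the base case produces the second identity.

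The computation is essentially mechanical, so the only genuinely delicate point is the initialization bookkeeping in the base case: one must verify that at step $0$ the dispatch occurs on $\param_0$ and the completing node $J_0$ belongs to $\Set_0$, so that exactly one of the $\Set_0$-gradients (the one already returned) cancels and the remaining sum runs over the initial gradients still in flight. This matches the intended interpretation that $D_k = \mu_k - \param_k$ collects precisely the scaled gradients computed on stale parameters that have not yet been applied at the server.
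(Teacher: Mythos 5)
Your proposal is correct and follows essentially the same route as the paper's own proof: the same base-case cancellation (using that $J_0 \in \Set_0$ and $\param_{I_0} = \param_0$) followed by the one-step difference of the two recursions and telescoping over $r = 1,\dots,k$. Your explicit remark about the initialization bookkeeping is a point the paper leaves implicit, but the argument is otherwise identical.
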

\begin{proof}
The proof follows from the definition of recurrence \eqref{eq:virtual_iters_def}. Indeed, first we can note that
\begin{align}
     \mu_1 - \param_1   &=  (w_0 - \eta \sum\nolimits_{i \in \Set_0} \frac{1}{np_i}\Tilde{g}_i(\param_0) ) - (w_0-\eta \frac{1}{np_{J_0}}\Tilde{g}_{J_{0}}(\param_{I_0})) \\
    &=  (w_0 - \eta \sum\nolimits_{i \in \Set_0} \frac{1}{np_i}\Tilde{g}_i(\param_0)) - (w_0-\eta \frac{1}{np_{J_0}}\Tilde{g}_{J_{0}}(\param_{0})) \\
    &=  -\eta \sum\nolimits_{i \in \Set_0} \indin{i \neq J_0}\frac{1}{np_i}\Tilde{g}_i(\param_0) .
\end{align}
Now for a general iteration number $k$ we have:
\begin{align}
    \mu_{k+1} - \param_{k+1}   &= (\mu_k - \eta \frac{1}{np_{K_k}}\Tilde{g}_{K_{k}}(\param_k)) - (w_k-\eta \frac{1}{np_{J_k}}\Tilde{g}_{J_k}(\param_{I_k})) \\
    &= (\mu_k - \param_k) + \eta(\frac{1}{np_{J_k}}\Tilde{g}_{J_k}(\param_{I_k})- \frac{1}{np_{K_k}}\Tilde{g}_{K_{k}}(\param_k)) \\
    &= (\mu_1 - \param_1) + \sum\nolimits_{r=1}^k \eta(\frac{1}{np_{J_r}}\Tilde{g}_{J_r}(\param_{I_r})- \frac{1}{np_{K_r}}\Tilde{g}_{K_{r}}(\param_r)) \\
    & =  -\eta \sum\nolimits_{i \in \Set_0} \indin{i \neq J_0} \frac{1}{np_{i}}\Tilde{g}_i(\param_0) +\eta  \sum\nolimits_{r=1}^k (\frac{1}{np_{J_r}}\Tilde{g}_{J_r}(w_{I_r}) - \frac{1}{np_{K_r}}\Tilde{g}_{K_{r}}(w_r)).
\end{align}
\end{proof}

\begin{lemma}
\label{lem:cardinality}
The sets $\{\mathcal{G}_k\}_{k \geq 0}$ have constant cardinal and compile all the gradients in computation at step $k>0$:
    \begin{equation}
        \begin{cases}
            (i) \quad |\mathcal{G}_k | = |\mathcal{G}_1 | = |\Set_0 | - 1,\\
            (ii) \quad  \mu_k - \param_k  =  \eta \sum_{g \in \mathcal{G}_k} g .
        \end{cases}
    \end{equation}
\begin{proof}
\textbf{Step (i)}: We are going to proof the result by induction. Assume $|\mathcal{G}_k | = |\Set_0 | - 1$ for some $k$. If $I_k=k$ we can immediately conclude that $|\mathcal{G}_{k+1} | = |\mathcal{G}_k |$.  Otherwise, $I_k < k$, hence there exists some $i \in [n]$ such that $- \frac{1}{np_{i}}\Tilde{g}_i(w_{I_k}) \in \mathcal{G}_k$. In particular, client $J_k$ is the client that finishes computation at step $k$, thus $- \frac{1}{np_{J_k}}\Tilde{g}_{J_k}(w_{I_k}) \in \mathcal{G}_k$. As a consequence, $|\mathcal{G}_k \setminus \{-\frac{1}{np_{J_k}}\Tilde{g}_{J_k}(w_{I_k})\}| = |\Set_0| -2$. Furthermore, by definition, all gradients in $\mathcal{G}_k$ are taken on models older than $k$. And by taking $I_k<k$, we obtain $-\frac{1}{np_{K_k}}\Tilde{g}_{K_k}(w_{k}) \notin \Set_k \setminus \{-\frac{1}{np_{J_k}}\Tilde{g}_{J_k}(w_{I_k})\}$. It concludes $|\mathcal{G}_{k+1} | = |\mathcal{G}_k |$.

\textbf{Step (ii)}: We also prove it by induction. It is valid for $k=1$. Now assume $\quad  \mu_k - \param_k = \sum_{g \in \mathcal{G}_k} g$, for some $k>1$.
\begin{align}
     \mu_{k+1} - \param_{k+1}  &=  (\mu_k - \param_k) + \frac{1}{np_{J_k}}\Tilde{g}_{J_k}(w_{I_k}) - \frac{1}{np_{K_k}}\Tilde{g}_{K_k}(w_k) \\
    &=  (\mu_k - \param_k) + (\indiacc{0}(I_k)\frac{1}{np_{J_k}}\Tilde{g}_{J_k}(w_{0})+ \dots + \indiacc{k}(I_k)\frac{1}{np_{J_k}}\Tilde{g}_{J_k}(w_{k})) - \frac{1}{np_{K_k}}\Tilde{g}_{K_k}(w_k) \\
    &=  (\mu_k - \param_k) + \indiacc{0}(I_k)\frac{1}{np_{J_k}}\Tilde{g}_{J_k}(w_{0})+ \dots + \indiacc{k-1}(I_k)\frac{1}{np_{J_k}}\Tilde{g}_{J_k}(w_{k-1})\\
    & \quad + (\indiacc{k}(I_k) \frac{1}{np_{J_k}}\Tilde{g}_{J_k}(w_{k}) - \frac{1}{np_{K_k}}\Tilde{g}_{K_k}(w_k)).
\end{align}
By induction, we have:
\begin{align}
     \mu_{k+1} - \param_{k+1}  &=  \sum\nolimits_{g \in \mathcal{G}_k} g + \indiacc{0}(I_k) \frac{1}{np_{J_k}}\Tilde{g}_{J_k}(w_{0})+ \dots + \indiacc{k-1}(I_k)\frac{1}{np_{J_k}}\Tilde{g}_{J_k}(w_{k-1}) + (\indiacc{k}(I_k)\frac{1}{np_{J_k}}\Tilde{g}_{J_k}(w_{k}) - \frac{1}{np_{K_k}}\Tilde{g}_{K_k}(w_k)).
\end{align}
If $I_k=k$ we have $J_k=K_k$: some client contributes instantaneously. This results in:
\begin{align}
     \mu_{k+1} - \param_{k+1}  &=  \sum\nolimits_{g \in \mathcal{G}_k} g + \underbrace{\indiacc{0}(I_k)\frac{1}{np_{J_k}}\Tilde{g}_{J_k}(w_{0})+ \dots + \indiacc{k-1}(I_k)\frac{1}{np_{J_k}}\Tilde{g}_{J_k}(w_{k-1})}_{=0} + \underbrace{(\indiacc{k}(I_k) \Tilde{g}_{J_k}(w_{k}) - \frac{1}{np_{K_k}}\Tilde{g}_{K_k}(w_k))}_{=0}\\
     &= \sum\nolimits_{g \in \mathcal{G}_{k+1}} g.
\end{align}
Same idea as in Step (i) allows us to conclude $\mu_{k+1} - \param_{k+1} =  \sum_{g \in \mathcal{G}_{k+1}} g $ when $I_k < k$.
\end{proof}
\end{lemma}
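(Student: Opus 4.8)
The plan is to prove (i) and (ii) together by induction on $k$, taking the index sets $\mathcal{I}_k$ as the primitive object and reading off both the cardinality of $\mathcal{G}_k$ and the displacement $\mu_k - \param_k$ from it. The two ingredients I would combine are the one-step recursion defining $\mathcal{I}_{k+1}$ from $\mathcal{I}_k$ and the closed form for $\mu_{k+1} - \param_{k+1}$ provided by \Cref{lem:technical-lemmma-2}. It is cleanest to argue at the level of index pairs $(i,j) \in \mathcal{I}_k$ rather than on the gradient vectors themselves, since the pairs are manifestly distinct; the correspondence $(i,j) \mapsto -\frac{1}{np_i}\Tilde{g}_i(\param_j)$ then transports the bookkeeping to $\mathcal{G}_k$ and avoids any spurious collision of equal-valued gradients.

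For the base case $k=1$, the definition $\mathcal{I}_1 = \{(i,0) : i \in \Set_0,\ i \neq J_0\}$ gives $|\mathcal{I}_1| = |\Set_0| - 1$, which is (i), and summing the associated scaled gradients reproduces exactly the expression for $\mu_1 - \param_1$ from \Cref{lem:technical-lemmma-2}, which is (ii). For the inductive step I would assume both claims at step $k$ and apply
$$\mu_{k+1} - \param_{k+1} = (\mu_k - \param_k) + \eta\Bigl(\tfrac{1}{np_{J_k}}\Tilde{g}_{J_k}(\param_{I_k}) - \tfrac{1}{np_{K_k}}\Tilde{g}_{K_k}(\param_k)\Bigr),$$
splitting on whether $I_k = k$. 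When $I_k = k$ the dispatched task returns in the same step, so $J_k = K_k$, the two new gradient terms cancel, $\mathcal{I}_{k+1} = \mathcal{I}_k$, and both parts are immediate. When $I_k < k$ the recursion reads $\mathcal{I}_{k+1} = (\mathcal{I}_k \setminus \{(J_k, I_k)\}) \cup \{(K_k, k)\}$; matching $+\eta\frac{1}{np_{J_k}}\Tilde{g}_{J_k}(\param_{I_k})$ with deletion of $(J_k, I_k)$ and $-\eta\frac{1}{np_{K_k}}\Tilde{g}_{K_k}(\param_k)$ with insertion of $(K_k, k)$ immediately yields (ii) for $\mathcal{G}_{k+1}$, while the cardinality claim (i) follows once the net effect of one deletion and one insertion is confirmed to be zero.

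The main obstacle is precisely justifying that net change in the $I_k < k$ case, which rests on two facts. First, $(J_k, I_k) \in \mathcal{I}_k$: this is forced by the definitions, since $J_k$ is the client completing at step $k$ the task that was dispatched at step $I_k$, so that gradient is genuinely still in flight at step $k$. Second, $(K_k, k) \notin \mathcal{I}_k$: every pair in $\mathcal{I}_k$ carries a time-index strictly smaller than $k$ (the newest pair possibly present is $(K_{k-1}, k-1)$), so a pair with time-index $k$ cannot already be present. Granting these two facts, the deletion decreases and the insertion increases the cardinality by exactly one each, so $|\mathcal{I}_{k+1}| = |\mathcal{I}_k| = |\Set_0| - 1$, closing the induction. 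I expect the only delicate point to be keeping the indexing conventions for $J_k$, $K_k$ and $I_k$ consistent, so that the pending-task membership and the novelty assertions hold verbatim.
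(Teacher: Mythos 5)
Your proof is correct and follows essentially the same route as the paper's: induction on $k$ via the one-step recursion for $\mu_{k+1}-\param_{k+1}$ from \Cref{lem:technical-lemmma-2}, with the case split on $I_k=k$ versus $I_k<k$ and the same one-deletion-one-insertion bookkeeping ($(J_k,I_k)$ leaves because that task completes, $(K_k,k)$ enters and is new because every pair already present has time-index strictly below $k$). Working at the level of the index pairs $\mathcal{I}_k$ rather than the gradient values is a minor and arguably cleaner repackaging --- it sidesteps any accidental coincidence of equal gradient vectors in the set $\mathcal{G}_k$ --- but the underlying argument is identical.
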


\begin{lemma}
\label{lem:technical-lemma-3}
For the sequence $(w_k)_{k \geq 0}$ of updates produced by \Algo\ and for the sequence of virtual updates $(\mu_k)_{k \geq 0}$ defined in \eqref{eq:virtual_iters_def}, it holds that
\begin{align}
\frac{1}{\nsteps+1} \sum\nolimits_{k=0}^{\nsteps} \PE[\Vert \mu_k - \param_k \Vert^2] & \leq 2 \eta^2 \cardgrad \sum\nolimits_{i=1}^n  \frac{\m[i][0][\nsteps]}{(n^2p_i^2)(\nsteps+1)} (2G^2+\sigma^2)\\
& \quad + 4 \eta^2 \cardgrad  \frac{\PE[\Vert \nabla f(w_0) \Vert^2] \weightm[0][\nsteps]}{(n^2p_i^2)(\nsteps+1)} \\
& \quad + 4 \eta^2 \cardgrad \sum\nolimits_{k=1}^{\nsteps} \frac{\weightm}{(n^2p_i^2)(\nsteps+1)}\PE[\Vert \nabla f(w_k) \Vert^2]\\
& \quad + 2 \eta^2 \cardgrad \sum\nolimits_{i=1}^n \frac{\sum\nolimits_{k=1}^K \m}{(n^2p_i^2)(\nsteps+1)}(2G^2+\sigma^2).
\end{align}
\end{lemma}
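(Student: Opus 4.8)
The plan is to start from the exact expression for the drift furnished by \Cref{lem:cardinality}, namely $\mu_k - \param_k = \eta \sum_{g \in \mathcal{G}_k} g$, together with the fact that $|\mathcal{G}_k| = |\Set_0| - 1 \le \cardgrad$ is constant in $k$. Since this is a sum of at most $\cardgrad$ vectors, a single application of $\norm{\sum_{l=1}^m a_l}^2 \le m \sum_{l=1}^m \norm{a_l}^2$ yields the pointwise bound $\norm{\mu_k - \param_k}^2 \le \eta^2 \cardgrad \sum_{g \in \mathcal{G}_k} \norm{g}^2$, where each summand has the form $g = -\frac{1}{np_i}\Tilde{g}_i(\param_j)$ for an in-flight index $(i,j) \in \mathcal{I}_k$. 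Summing over $k \in \{0,\dots,\nsteps\}$ reduces the whole statement to controlling $\PE\bigl[\sum_{k=0}^{\nsteps} \sum_{g \in \mathcal{G}_k} \norm{g}^2\bigr]$.

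The crucial step, which I expect to be the main obstacle, is a change of summation order: instead of listing, for each $k$, the gradients currently pending, I would sum over the gradients themselves and weight each by the number of server steps during which it remains in the system. A gradient dispatched to node $i$ at step $s$, which equals $\Tilde{g}_i(\param_s)$ scaled by $1/(np_i)$, belongs to $\mathcal{G}_{s+1}, \dots, \mathcal{G}_{s+d}$ with $d$ exactly the number of CS steps it stays pending, i.e. $d = \M[i][s][\nsteps]$ (recall $\M$ already carries the indicator $\indiacc{i}(K_{s+1})$ that node $i$ was selected at step $s$). Using the recursion defining $\mathcal{I}_k$ and \Cref{lem:technical-lemmma-2} to separate the $|\Set_0|-1$ initial tasks, all evaluated at $\param_0 = w_0$, from those dispatched at steps $k \ge 1$, this gives the identity
\[
\sum_{k=0}^{\nsteps} \sum_{g \in \mathcal{G}_k} \norm{g}^2 = \sum_{i \in \Set_0,\, i \neq J_0} \frac{\M[i][0][\nsteps]}{n^2 p_i^2}\norm{\Tilde{g}_i(w_0)}^2 + \sum_{k=1}^{\nsteps} \sum_{i=1}^n \frac{\M[i][k][\nsteps]}{n^2 p_i^2}\norm{\Tilde{g}_i(\param_k)}^2\eqsp,
\]
which is the source of the separate $k=0$ terms in the statement.

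I would then take expectations and handle the gradient norms with the standard variance–dissimilarity decomposition: writing $\Tilde{g}_i(\param) = (\Tilde{g}_i(\param) - \nabla f_i(\param)) + (\nabla f_i(\param) - \nabla f(\param)) + \nabla f(\param)$ and invoking \Cref{assum:boundedvariance}, \Cref{assum:graddissim} together with Young's inequality yields $\PE[\norm{\Tilde{g}_i(\param_k)}^2 \mid \F_k] \le \sigma^2 + 2G^2 + 2\norm{\nabla f(\param_k)}^2$. The delicate point is that both $\M[i][k][\nsteps]$ and $\norm{\nabla f(\param_k)}^2$ are random; here I would exploit that $\M$ is a measurable function only of the routing sequence $(R_l)$ and the service times $(\xi^i_l)$, which by construction are independent of the stochastic-gradient noise. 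Conditioning on the queuing trajectory therefore lets me replace $\M[i][k][\nsteps]$ by its expectation $\m$ and factor the product, producing terms $\m(\sigma^2 + 2G^2)$ in the noise/heterogeneity part and $\m\,\PE[\norm{\nabla f(\param_k)}^2]$ in the gradient part.

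Finally I would collect terms: the $(\sigma^2+2G^2)$ contributions, summed over $i$, give the two terms carrying $(2G^2+\sigma^2)$, one from the initial tasks with $\m[i][0][\nsteps]$ and one from steps $k \ge 1$ with $\sum_{k=1}^{\nsteps}\m$; the gradient-norm contributions, once one recognizes $\weightm = \sum_{i=1}^n \m/(n^2 p_i^2)$, give the term with $\weightm[0][\nsteps]\,\PE[\norm{\nabla f(w_0)}^2]$ and the running term $\sum_{k=1}^{\nsteps}\weightm\,\PE[\norm{\nabla f(w_k)}^2]$. Multiplying through by $\eta^2\cardgrad$ and dividing by $\nsteps+1$ recovers the asserted inequality, up to the absolute constants tracked in the numerators. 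I anticipate the independence/conditioning argument that justifies factoring $\PE[\M\,\norm{\nabla f(\param_k)}^2]$ to be the only genuinely subtle ingredient, the remainder being the reindexing bookkeeping and routine norm inequalities.
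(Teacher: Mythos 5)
Your proposal follows essentially the same route as the paper's proof: bound $\norm{\mu_k-\param_k}^2$ by $\eta^2\cardgrad\sum_{g\in\mathcal{G}_k}\norm{g}^2$ via \Cref{lem:cardinality}, swap the order of summation so that each pending gradient is counted $\M$ times, apply \Cref{assum:boundedvariance} and \Cref{assum:graddissim} to get the $2G^2+\sigma^2$ and $\norm{\nabla f(w_k)}^2$ contributions, and identify $\m$ and $\weightm$ after taking expectations. The only distinction is that you explicitly flag the factorization $\PE[\M\,\norm{\nabla f(w_k)}^2]=\m\,\PE[\norm{\nabla f(w_k)}^2]$ as needing an independence/conditioning argument, a step the paper performs silently.
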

\begin{proof}
Using the statement of \Cref{lem:cardinality}, we bound the expected value of the correction term as follows:
\begin{align}
\label{eq:first_bound_corr}
\PE[\Vert \mu_k - \param_k \Vert^2] =\eta^2 \PE[\Vert \sum\nolimits_{g \in \mathcal{G}_k}  g \Vert^2] \leq \eta^2 \PE[|\mathcal{G}_k| \sum\nolimits_{g \in \mathcal{G}_k} \Vert g \Vert^2] \leq \eta^2 \PE[\cardgrad \sum\nolimits_{g \in \mathcal{G}_k} \Vert g \Vert^2].
\end{align}
As the cardinal of sets $|\mathcal{G}_k|:=\cardgrad$ is constant among iterations, and in particular it is independent from $g$'s, we can simplify the form above.

We also introduce the set $U_k$:
\begin{equation}
U_k = \{ i \in \{1, \ldots, n\} | X_{i,k}>0 \}.
\end{equation}
Hence, from \eqref{eq:first_bound_corr} and \Cref{assum:boundedvariance}, we get
\begin{align}
    \PE[\Vert \mu_k - \param_k \Vert^2] &\leq \eta^2 \cardgrad \PE[\sum\nolimits_{(i,j) \in \mathcal{I}_k \cup \{U_k \times \{0\}\}} \frac{1}{n^2p_i^2}2\Vert \nabla f_i(w_j) \Vert^2+ 2\sigma^2]\\
    & \leq \eta^2 \cardgrad \PE[\sum\nolimits_{i=1}^n \frac{1}{n^2p_i^2} \underbrace{\indi{U_k \cap \Set_0}(i)(4G^2+4 \Vert \nabla f(w_0) \Vert^2+2\sigma^2)}_{\text{gradients on initial model}} + \sum\nolimits_{j=1}^k \indi{\mathcal{I}_k}\bigl((i,j)\bigr) (4G^2 + 4 \Vert \nabla f(w_j) \Vert^2+2\sigma^2)]
\end{align}
Now we average over $\nsteps$ iterations:
\begin{align}
    \frac{1}{\nsteps+1} \sum\nolimits_{k=0}^{\nsteps} \PE[\Vert \mu_k - \param_k \Vert^2] & \leq 2 \eta^2 \cardgrad \sum\nolimits_{i=1}^n (\sum\nolimits_{k=0}^{\nsteps} \frac{\PP( i \in U_k \cap \Set_0)}{\nsteps+1}) (\frac{2G^2+\sigma^2}{n^2p_i^2})\\
    & \quad + 4 \eta^2 \cardgrad  \PE[\Vert \nabla f(w_0) \Vert^2] \sum\nolimits_{i=1}^n \frac{1}{n^2p_i^2}(\sum\nolimits_{k=0}^{\nsteps} \frac{\PP(i \in U_k \cap \Set_0)}{\nsteps+1}) \\
    & \quad + \frac{4}{\nsteps+1} \eta^2 \cardgrad \sum\nolimits_{i=1}^n \PE[ \sum\nolimits_{k=1}^{\nsteps} \frac{1}{n^2p_i^2} \sum_{j=1}^k(\indi{\mathcal{I}_k}\bigl((i,j)\bigr)\Vert \nabla f(w_j) \Vert^2)]\\
    & \quad + \frac{4G^2+2\sigma^2}{\nsteps+1} \eta^2 \cardgrad \sum\nolimits_{i=1}^n \PE[ \sum\nolimits_{k=1}^{\nsteps} \frac{1}{n^2p_i^2} \sum\nolimits_{j=1}^k\indi{\mathcal{I}_k}((i,j)) ].
\end{align}
We rearrange the last 2 terms:
\begin{align}
    \frac{1}{\nsteps+1} \sum\nolimits_{k=0}^{\nsteps} \PE[\Vert \mu_k - \param_k \Vert^2] & \leq 2 \eta^2 \cardgrad \sum\nolimits_{i=1}^n  \frac{\sum\nolimits_{k=1}^{\nsteps}\PP(i \in U_k \cap \Set_0)}{\nsteps+1} (\frac{2G^2+\sigma^2}{n^2p_i^2})\\
    & \quad + 4 \eta^2 \cardgrad \PE[\Vert \nabla f(w_0) \Vert^2] \sum\nolimits_{i=1}^n \frac{1}{n^2p_i^2} \frac{\sum\nolimits_{k=1}^{\nsteps}\PP(i \in u_k \cap \Set_0)}{\nsteps+1} \\
    & \quad + 4 \eta^2 \cardgrad \sum\nolimits_{k=1}^{\nsteps} \PE[\frac{\sum\nolimits_{i=1}^n \frac{1}{n^2p_i^2} \sum\nolimits_{r=k}^{\nsteps}\indi{\mathcal{I}_r}((i,k))}{\nsteps+1}\Vert \nabla f(w_k) \Vert^2]\\
    & \quad + 2 \eta^2 \cardgrad \sum\nolimits\nolimits_{k=1}^{\nsteps} \PE[\frac{\sum\nolimits_{i=1}^n \frac{1}{n^2p_i^2} \sum\nolimits_{r=k}^{\nsteps}\indi{\mathcal{I}_r}((i,k))}{\nsteps+1}(2G^2+\sigma^2)].
\end{align}

We can simplify the bounds with the following identity:
\begin{equation}
\sum\nolimits_{r=k}^{\nsteps} \indi{\mathcal{I}_r}((i,k)) = \indiacc{i}(K_{k+1})\sum\nolimits_{r=k}^{\nsteps} \indi{(\sum\nolimits_{l=k}^r\indi{J_l=i})<X_{i,k}} = \M[i][k][\nsteps], \text{for } k>0.
\end{equation}
And with a slight abuse of notation, we take: $\M[i][0][\nsteps] = \sum\nolimits_{k=1}^\nsteps\indi{U_k \cap \Set_0}(i).$

Combining the above bounds, we obtain
\begin{align}
    \frac{1}{\nsteps+1} \sum\nolimits_{k=0}^{\nsteps} \PE[\Vert \mu_k - \param_k \Vert^2] & \leq 2 \eta^2 \cardgrad \sum\nolimits_{i=1}^n  \frac{\PE[\M[i][0][\nsteps]]}{\nsteps+1} (\frac{2G^2+\sigma^2}{n^2p_i^2})\\
    & \quad + 4 \eta^2 \cardgrad  \PE[\Vert \nabla f(w_0) \Vert^2] \sum\nolimits_{i=1}^n \frac{1}{n^2p_i^2} \frac{\PE[\M[i][0][\nsteps]]}{\nsteps+1} \\
    & \quad + 4 \eta^2 \cardgrad \sum\nolimits_{k=1}^{\nsteps} \frac{\sum_{i=1}^n \frac{1}{n^2p_i^2}\PE[\M[i][k][\nsteps]]}{\nsteps+1}\PE[\Vert \nabla f(w_k) \Vert^2]\\
    & \quad + 2 \eta^2 \cardgrad \sum\nolimits_{i=1}^n \frac{\sum\nolimits_{k=1}^{\nsteps}\PE[\M[i][k][\nsteps]]}{\nsteps + 1}(\frac{2G^2+\sigma^2}{n^2p_i^2})\eqsp,
\end{align}
and the statement follows using the definition $\weightm$.
\end{proof}

\subsection{Proof of \Cref{theorem:cvgquant}}
We apply the bound \Cref{lem:technical-1} and use \Cref{lem:technical-lemma-3} to control the correction term $\frac{1}{\nsteps+1} \sum_{k=0}^{\nsteps} \PE[\Vert \mu_k - \param_k \Vert^2]$. Hence, we get
\begin{align}
    \frac{1}{4(\nsteps+1)} \sum\nolimits_{k=0}^{\nsteps} \PE[\Vert \nabla f(w_k) \Vert ^2] & \leq \frac{\PE[f(\mu_0) - f(\mu_{\nsteps+1})]}{\eta (\nsteps + 1)} + \frac{L^2}{2(\nsteps+1)} \sum\nolimits_{k=0}^{\nsteps} \PE[\Vert \mu_k - w_k \Vert^2] + {\eta L}\sum\nolimits_i^n \frac{2G^2+\sigma^2}{n^2p_i}\\
    & \leq \frac{\PE[f(\mu_0) - f(\mu_{\nsteps+1})]}{\eta (\nsteps+1)} + {\eta L}\sum\nolimits_i^n \frac{2G^2+\sigma^2}{n^2p_i}\\
    & \quad + L^2\eta^2 \cardgrad \left( \sum\nolimits_{i=1}^n  \frac{\sum\nolimits_{k=0}^{\nsteps} \m[i][k][\nsteps]}{n^2 p_i^2 (\nsteps+1)} (2G^2+\sigma^2) +  \frac{2\PE[\Vert \nabla f(w_0) \Vert^2] \weightm[0][\nsteps]}{\nsteps+1} \right) \\
    & \quad + \frac{L^2\eta^2 \cardgrad}{\nsteps+1} \sum\nolimits_{k=1}^{\nsteps} 2 \weightm[k][\nsteps] \PE[\Vert \nabla f(w_k) \Vert^2]\eqsp.
\end{align}
Hence we have:
\begin{align}
\frac{1}{\nsteps+1} \sum\nolimits_{k=0}^{\nsteps} (1/4 - 2\weightm[k][\nsteps] L^2\eta^2 \cardgrad) \PE[\Vert \nabla f(w_k) \Vert ^2] & \leq \frac{\PE[f(\mu_0) - f(\mu_{\nsteps+1})]}{\eta (\nsteps+1)} + {\eta L}\sum\nolimits_i^n \frac{2G^2+\sigma^2}{n^2p_i}\\
& \quad + L^2\eta^2 \cardgrad \sum\nolimits_{i=1}^n  \frac{\sum\nolimits_{k=0}^{\nsteps} \m[i][k][\nsteps]}{n^2 p_i^2 (\nsteps+1)} (2G^2+\sigma^2)\eqsp.
\end{align}
Now we impose the step size condition
\begin{equation}
\label{eq:max_step_size}
\eta \leq \sqrt{\frac{1}{16L^2 \cardgrad \max_{k \in \{1,\ldots,\nsteps\}}\weightm[k][\nsteps]}}\eqsp,
\end{equation}
which enable us to conclude that
\begin{align}
    \frac{1}{8(\nsteps+1)} \sum\nolimits_{k=0}^{\nsteps} \PE[\Vert \nabla f(w_k) \Vert ^2] & \leq \frac{\PE[f(\mu_0) - f(\mu_{\nsteps+1})]}{\eta (\nsteps+1)} + {\eta L}\sum\nolimits_i^n \frac{2G^2+\sigma^2}{n^2p_i}\\
    & \quad + L^2\eta^2 \cardgrad \sum\nolimits_{i=1}^n  \frac{\sum\nolimits_{k=0}^{\nsteps} \m[i][k][\nsteps]}{n^2p_i^2(\nsteps+1)} (2G^2+\sigma^2)\eqsp,
\end{align}
and the statement follows.

\subsection{Influence of the strong growth condition}
The assumption \Cref{assum:boundedvariance} can be generalized to the \emph{strong growth condition} \cite{vaswani2019fast}:
\[
\textstyle{
\mathbb{E}[\left\|\widetilde{g}_i(x)-\nabla f_i(x)\right\|^2] \leq \sigma^2 + \rho^2 \left\| \nabla f_{i}(x)\right\|^2
}\eqsp.
\]
All previous derivations are impacted by a factor $\rho^2$. But the proofs remain the same. In particular, the quantity $\Delta_k$ from \Cref{lem:technical-1} can be bounded as:
\begin{align}
    \Delta_k & \leq (-\frac{\eta}{2} + 2\eta^2L\sum\nolimits_{i=1}^n\frac{1+\rho^2}{n^2p_i}) \Vert \nabla f (\param_k) \Vert^2 +\frac{\eta}{2} L^2 \Vert \mu_k - \param_k \Vert^2 + {2\eta^2L}\sum\nolimits_{i=1}^n \frac{(1+\rho^2){G}^2}{n^2p_i}+ \eta^2L\sigma^2\sum\nolimits_{i=1}^n\frac{1}{n^2p_i}.
\end{align}
This slightly change the condition on the step size: $\eta \leq \frac{n^2}{8L\sum\nolimits_{i=1}^n\frac{1+\rho^2}{p_i}}$.
The rest of the proof is similarly impacted. We now impose the additional step size condition:
\begin{equation}
\eta \leq \sqrt{\frac{1}{(1+\rho^2)16L^2 \cardgrad \max_{k \in \{1,\ldots,\nsteps\}}\weightm[k][\nsteps]}}\eqsp,
\end{equation}
which enable us to conclude that
\begin{align}
    \frac{1}{8(\nsteps+1)} \sum\nolimits_{k=0}^{\nsteps} \PE[\Vert \nabla f(w_k) \Vert ^2] & \leq \frac{\PE[f(\mu_0) - f(\mu_{\nsteps+1})]}{\eta (\nsteps+1)} + {\eta L}\sum\nolimits_i^n \frac{2(1+\rho^2)G^2+\sigma^2}{n^2p_i}\\
    & \quad + L^2\eta^2 \cardgrad \sum\nolimits_{i=1}^n  \frac{\sum\nolimits_{k=0}^{\nsteps} \m[i][k][\nsteps]}{n^2p_i^2(\nsteps+1)} (2(1+\rho^2)G^2+\sigma^2)\eqsp.
\end{align}

\section{Proofs of Section \ref{sec:Jack}}
\label{sec:app_proof_queuing}

\subsection{Proof of Proposition \ref{prop:Jack}}

Given the assumptions, it is straightforward to check that the dynamics are those of the Markov process with the given generator.

Then the result follows from classical results in queuing theory: the Markov process corresponds to a Jackson quasi-reversible network with an explicit stationary distribution, see for instance Theorem 1.12 in \cite{serfozo1999}.

Before continuing, we need a fundamental property of closed Jackson network which is the arrival Theorem also called MUSTA in the literature:

\begin{theorem}[Arrival Theorem, Prop 4.35 in \cite{serfozo1999}]\label{theo:arr}
Suppose the system is stationary.
Upon arrival to a given node (i.e., just before waiting or being served in the queue), a task sees the network according to the distribution $\pi_{C-1}$, i.e.,
$$ \PP( X_{{\tau_{i,1}}^-} =x)=\pi_{C-1}(x).$$
\end{theorem}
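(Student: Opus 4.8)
The plan is to compute directly the Palm distribution of the network state at the instants of arrivals to node $i$, and to show that it coincides with the stationary law $\pi_{\cardgrad-1}$ of the network carrying one fewer task. First I would fix the tagged arriving task and describe an arrival to node $i$ at the level of the generator of \Cref{prop:Jack}: if the remaining $\cardgrad-1$ tasks are in configuration $y$ (with $\sum\nolimits_l y_l = \cardgrad-1$), then immediately before the tagged task joins node $i$ it must be departing from some node $j$, so the full state jumps from $y+e_j$ to $y+e_i$. Such a jump occurs at rate $q(y+e_j,\, y+e_i) = p_i \mu_j \mathds{1}((y+e_j)_j > 0) = p_i \mu_j$, the indicator being $1$ since $(y+e_j)_j = y_j+1 \geq 1$.

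Next I would invoke Palm calculus, or equivalently the ergodic theorem for the embedded jump chain, to express the configuration seen upon arrival. The long-run fraction of arrivals to node $i$ that find the other tasks in configuration $y$ is proportional to the stationary rate of the corresponding jumps, namely
\begin{equation}
\alpha_i(y) \;\propto\; \sum\nolimits_{j=1}^{n} \pi_{\cardgrad}(y+e_j)\, q(y+e_j,\, y+e_i)\eqsp.
\end{equation}

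The third step is the algebraic simplification using the product form of \Cref{prop:Jack}. Since $\pi_{\cardgrad}(y+e_j) = H_{\cardgrad}^{-1}\, \theta_j \prod\nolimits_{l} \theta_l^{y_l}$ and $q(y+e_j, y+e_i) = p_i\mu_j$, the summand factorizes, and crucially $\theta_j \mu_j = (p_j/\mu_j)\mu_j = p_j$, so that $\sum\nolimits_{j} \theta_j \mu_j = \sum\nolimits_j p_j = 1$. Hence
\begin{equation}
\sum\nolimits_{j=1}^n \pi_{\cardgrad}(y+e_j)\, q(y+e_j, y+e_i) = H_{\cardgrad}^{-1}\, p_i \prod\nolimits_{l} \theta_l^{y_l}\eqsp.
\end{equation}
Normalizing over all $y$ with $\sum\nolimits_l y_l = \cardgrad-1$ cancels the common factor $H_{\cardgrad}^{-1}\, p_i$ and replaces it by $H_{\cardgrad-1}$, leaving $\alpha_i(y) = \prod\nolimits_{l}\theta_l^{y_l} / H_{\cardgrad-1} = \pi_{\cardgrad-1}(y)$, which is exactly the claimed identity $\PP(X_{\tau_{i,1}^-} = y) = \pi_{\cardgrad-1}(y)$.

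The main obstacle is the rigorous justification of the second step: passing from "the law seen by an arriving task" to "the jump-rate-weighted stationary distribution" requires the machinery of Palm probabilities for the stationary marked point process of arrivals, or, equivalently, a careful application of the ergodic theorem to the embedded jump chain together with a rate-conservation (level-crossing) identity. One must also take care to identify the pre-arrival configuration $y$ with the positions of the $\cardgrad-1$ non-tagged tasks, and to verify that stationarity of $X$ (guaranteed here by initializing at $\pi_{\cardgrad}$) indeed yields a stationary arrival process so that these Palm identities apply. Once this foundation is in place the computation above is routine, and this is precisely the content of Prop.~4.35 in \cite{serfozo1999}, which I would ultimately cite rather than reprove in full.
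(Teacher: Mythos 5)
The paper does not actually prove this statement: it is imported verbatim as Proposition 4.35 of \cite{serfozo1999} (with a pointer to Example 3.3.4 of \cite{baccelli2002} for the Palm-probability formulation) and used as a black box in the proof of \Cref{prop:mik}. Your proposal therefore goes strictly beyond what the paper does, and the derivation you give is the standard, correct one. The algebra checks out: an arrival to node $i$ with the other tasks in configuration $y$ is a transition $y+e_j \to y+e_i$ at rate $p_i\mu_j$, the product-form weight contributes $\pi_{\cardgrad}(y+e_j)=H_{\cardgrad}^{-1}\theta_j\prod_l\theta_l^{y_l}$, and the cancellation $\theta_j\mu_j=p_j$ with $\sum_j p_j=1$ collapses the sum to $H_{\cardgrad}^{-1}p_i\prod_l\theta_l^{y_l}$, which normalizes to $\pi_{\cardgrad-1}(y)$. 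This makes transparent exactly why the complete-graph routing of \Cref{prop:Jack} gives such a clean MUSTA statement. Two caveats, both of which you essentially flag yourself: first, the identity must be read as describing the positions of the $\cardgrad-1$ \emph{non-tagged} tasks (the literal left limit $X_{\tau_{i,1}^-}$ still carries $\cardgrad$ tasks, with the tagged one at its origin node $j$), so the paper's notation is slightly loose and your proof supplies the correct interpretation; second, the "embedded jump chain" phrasing does not quite cover the case $j=i$ (a task routed back to its own node produces an arrival event but no state change of the Markov process), so the rigorous formulation really does need the Palm calculus for the marked point process of arrivals rather than the transition chain --- which is precisely the foundational step you defer to \cite{serfozo1999} and \cite{baccelli2002}, as the paper itself does.
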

For the link with Palm probabilities, we also refer to Example 3.3.4. in \cite{baccelli2002}.

Now, define $S_i$ the first sojourn time on node $i$, i.e.,
\begin{equation}
    S_i = \inf\{t \ge \tau_{i,1} | D_i(t) = X_i({\tau_{i,1}}^{-}) +1 \}
\end{equation}


Recall that,
$\PE^{C}$ corresponds to the stationary average for a system with $C$ tasks (in particular the process $X_t$ follows $\pi_C$ for all times $t \ge 0$).
We denote in turn by $\PE^{C-1}$ the Palm probability associated to the event of an arrival at a given node (say $i$) and corresponding informally to conditioning to $\tau_{i,1}= 0$. Using the Arrival Theorem,
the distribution at time $0$ at node $i$ corresponds in this case to $\pi_{C-1}$, the dynamics under the Palm probabilities being unchanged (see\cite{baccelli2002}).

\subsection{Proof of Proposition \ref{prop:mik}}

Assuming the system is stationary and using the definition of $\m$ we have that for any $k$,
\begin{align}
\m = \weightm[i][\nsteps] = \PE^{\cardgrad}\Big[ \Big(\sum\nolimits_{n} \mathds{1}(\tau_{i,1} \le T_n \le S_i) \Big)\wedge \nsteps \big].
\end{align}
Using the monotone convergence Theorem,
\begin{align}
\label{eq:delay}
\lim_{\nsteps \to \infty} \m = m_{i} =\PE^{\cardgrad}\Big[ \sum\nolimits_{n} \mathds{1}(\tau_{i,1} \le T_n \le S_i) \big].
\end{align}

We then use the arrival Theorem (\ref{theo:arr}) for closed Jackson network and using the Palm probability, we can write that
\small
$$
\PE^{\cardgrad}[\sum\nolimits_{n} \mathds{1}(\tau_{i,1} \le T_n \le S_i)] =  \PE^{{\cardgrad-1}}[\sum\nolimits_{n} \mathds{1}(0\le T_n \le S_i)].$$
\normalsize
Then by using the stochastic intensity formula (see Definition 1.8.10 and Example
1.8.3. in \cite{baccelli2002}):
\begin{align}
\PE^{{\cardgrad-1}}[\sum\nolimits_{n} \mathds{1}(0 \le T_n \le S_i)] &=
\PE^{{\cardgrad-1}}\Big[\int_0^{S_i} \sum\nolimits_{j=1}^{n}\mu_j \mathds{1}(X_j(s)>0) \,ds\Big].
\end{align}

\subsection{Computation of the constant $\Gamma$}

\begin{align}
    \Gamma(c) &=    {\PE[ X  \indi{X+Y \le c}] \over \PE[  \indi{X+Y \le c}]} ,
\end{align}
with $X$ an Exp(1) and  $Y$ an Erlang(F,1) independent from each other.
By integrating by parts:
\begin{align}
       \Gamma  &= \frac{ \int_0^\infty\int_0^{ c- y} x e^{-x}dx \mathds{1}( y\leq c) dP_Y(y)}{\PP(X+Y \le c)},\\
      &= \frac{ \int_0^\infty (-(c-y)e^{-c+y} + \int_x 1_{x \le c-y} \mathds{1}( y\leq c) dP_X(x) dP_Y(y)}{\PP(X+Y \le c)},\\
            &= \frac{ \int_0^\infty -(c-y)e^{-c+y}   \mathds{1}( y\leq c) {y^{F-1} \over (F-1)!}e^{-y}dy}{\PP( X+Y \le c)} +1,\\
         &= \frac{ e^{-c}( -c^{F+1}/F! + F c^{F+1}/(F+1)!) }{\PP(X+Y \le c)} +1,\\
     &= \frac{ -e^{-c}c^{F+1}/(F+1)! + 1- \sum_{k=1}^{F} e^{-c}c^{k}/(k)! }{1-\sum_{k=1}^{F} e^{-c}c^{k}/(k)!},\\
  &= \frac{{\PP(\sum_{i=1}^{F+2} E_i \le c)}}{{\PP(\sum_{i=1}^{F+1} E_i \le c)}},\\
     \end{align}

\subsection{Proof of Proposition \ref{prop:convSR}}

First note that:
$$ \mi(\iota) = \PE^{{\cardgrad-1}}\Big[\int_0^{S_i} \sum\nolimits_{j=1}^{n}\mu_j \mathds{1}(X_j^{\iota}(s)>0) \,ds\Big] \le \lambda \PE^{{\cardgrad-1}}[S_i].$$

Then it follows from the FIFO representation that
$$\PE^{{\cardgrad-1}}(S_i)={1 \over \mu_i} (\PE^{{\cardgrad-1}}[X_i^{\iota}]) +1)$$

which implies the claim.
\section{Upper bounds simulations}
\label{sec:app:bounds_simu}
\subsection{Illustration of $G(\mathbf{p},\eta)$ before optimization}
We decide to simulate $n=100$ nodes with $\cardgrad=10$ initial tasks. Nodes can only be fast (sampled with $p_i=p$) or slow (sampled with $p_i=\frac{2}{n}-p$), and are evenly distributed. We estimate the values of $\m$ through Monte-Carlo and compute the upper bounds given in \Cref{theorem:cvgquant}. All others constants are kept unitary to ease the computation.
\begin{figure}[h!]
    \centering
    \includegraphics[scale=.6]{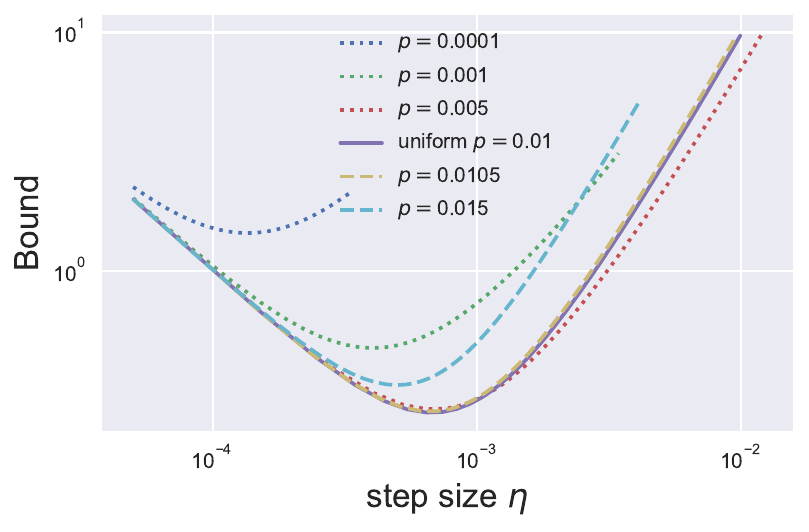}
    \caption{Variation of the non-convex upper-bound with respect to the step size $\eta$, for $K=10^4$ server steps and different values of sampling $p$. The maximum step size value is different for each case and equal to $\sqrt{\frac{1}{8L^2 \cardgrad max \weightm }}$.}
    \label{fig:bounds}
\end{figure}
In \Cref{fig:bounds}, we plot the value of the previously mentioned upper-bound with respect to the step size $\eta$, for several sampling probabilities of fast node $p$. When the step size considered is small, all sampling strategies are equivalent. Whereas for large value of $\eta$, sampling around the uniform one is a good strategy. Large value of $p$, close to the limit $\frac{2}{n}$, hinders the bound because it increases the delays for slow nodes by sampling fast nodes quite often. In \Cref{fig:proba_optimal} and \Cref{fig:relative_bounds}, we define grids of $50$ values of $p$ around the uniform one, and for each $p$ we compute the exact optimal step size by solving the cube roots. The optimal values of the sampling $p$ on the grid, and of the optimal step size are further used to compute the optimal bound and compare it against the one obtained with uniform sampling.
\subsection{Bounds w.r.t physical time}
\label{sec:app:physical_time}
We want here to focus on the relative improvements of the upper bounds when time rather than CS steps is considered as fixed. Indeed, when we determine complexity in terms of number of communications, we don't take into account the time intervals between two successive arrivals at the central server. In particular, the results from \Cref{sec:optim_bounds} propose to sample more frequently slow nodes. But this results into an increased waiting time between two consecutive server steps. As a consequence, in this section, we choose a fixed unit of time $U=1000$ and optimize the bounds for $T = \lambda(\mathbf{p}) \cdot U$ server steps, where $\lambda(\mathbf{p})$ is the average network speed.

We choose the sampling probabilities $\mathbf{p}$ and the step size $\eta$ by solving the constrained optimization problem $\min_{\mathbf{p},\eta} G(\mathbf{p},\eta)$ as a function of $\eta \leq \eta_{\max}(\mathbf{p})$, where
\begin{equation}
G(\mathbf{p},\eta)= \frac{A}{\eta (\lambda(\mathbf{p}) U+1)} + \frac{\eta L B}{n}\sum\nolimits_{i=1}^n \frac{1}{np_i} + \frac{\eta^2 L^2 B \cardgrad}{n} \sum\nolimits_{i=1}^n  \frac{\sum\nolimits_{k=0}^{\lambda(\mathbf{p}) U} \m}{n p_i^2(\lambda(\mathbf{p}) U+1)}\eqsp,
\end{equation}
and where $A=\PE[f(\mu_0) - f(\mu_{\nsteps+1})]$. In \Cref{fig:bounds_concu_time} we run the same simulation as in \Cref{sec:optim_bounds}, for a fixed amount of time $U$.
\begin{figure}
    \centering
    \includegraphics[scale=.6]{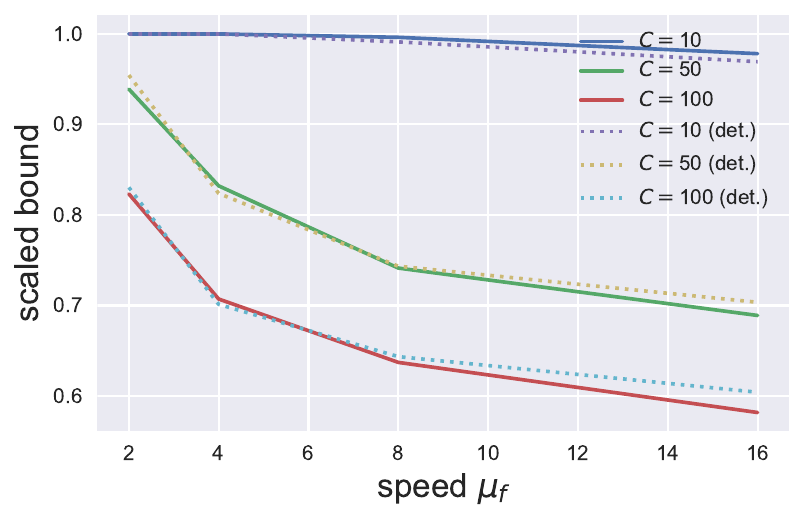}
    \caption{Relative improvements of the upper bounds as a function of the speed for different concurrency levels.}
    \label{fig:bounds_concu_time}
\end{figure}
Taking into account a fixed amount of time $U$ instead of CS epochs $\nsteps$ also favours our approach. The experimental results suggest to sample less fast nodes. It reduces delays (in number of steps), but increases the average time spent between two consecutive server steps. This trade-off is key for optimizing the bounds. When the concurrency is small (\wrt\ $n$), uniform sampling appears as the best strategy. However, by taking $p=8.5 \cdot 10^{-3}$, for full concurrency ($\cardgrad=n$), the bound can be reduced by $40\%$.
\section{2 clusters under saturation}
\label{sec:app:2clusters}
\subsection{Example with 2 saturated clusters}
In \Cref{sec:closed_network}, we propose a study of the delays and queue lengths when the number of task goes to infinity with a rate controlled by some $\iota>0$. In particular, we introduce the scaled intensities for slow and fast nodes as:
\begin{equation}
    \begin{cases}
        \gamma_{s}(\iota) = \frac{\max_{i \in [1, n]}(\theta_i)}{\theta_s} = \frac{\theta_s}{\theta_s} =1\\
        \gamma_{f}(\iota) = \frac{\max_{i \in [1, n]}(\theta_i)}{\theta_f} = \frac{\theta_s}{\theta_f} =1+\underbrace{c_f \cdot \iota^{\alpha-1}}_{\text{deviation from slow speed}}
    \end{cases}
\end{equation}
Note $c_f>0$ and $\alpha \leq 1$ are parameters we are free to choose to match the number of tasks in the network. In particular, the total number of tasks also scales as follows: $\beta l^{1-\alpha} = \cardgrad+1$.
Thanks to \Cref{prop:convSR}, we can bound the number of server steps when a task arrive and quits some node $i$ as:
\begin{equation}
\lim_{\iota \to \infty} \iota^{\alpha-1} \mi(\iota) \leq  \lim_{\iota \to \infty}{\lambda \over \mu_i}(\iota^{\alpha-1}\PE[X_{i}^\iota]+1),
\end{equation}
where $\lambda=n_f \mu_s + (n-n_f) \mu_s$.
Hence,
\begin{equation}
    \lim_{\iota \to \infty} \iota^{\alpha-1} \mi(\iota) \leq  \frac{n_f \mu_s + (n-n_f) \mu_s}{\mu_i}(\frac{1}{c_f} \Gamma(c_f \beta)+1)
\end{equation}
We will further assume $n_f=\frac{n}{2}$, and $p_i=\frac{1}{n}$.  Under this setting, we have $\Gamma(c_F \beta) \simeq 1$ and
\begin{align}
\frac{\iota^{1-\alpha}}{c_f} \Gamma(c_f \beta)&= \frac{1}{\gamma_f(\iota) -1 }\\
&= \frac{1}{\frac{\theta_s}{\theta_f} -1 }\\
&= \frac{1}{\frac{\mu_fp_s}{\mu_sp_f} -1}\\
&=\frac{1}{\frac{\mu_f}{\mu_s} -1}.
\end{align}
For fast nodes, the delay can be simplified as:
\begin{equation}
    \lim_{\iota \to \infty} \mi(\iota) \leq \frac{n}{2}\frac{\mu_s+\mu_f}{\mu_f}\frac{1}{\frac{\mu_f}{\mu_s} -1}.
\end{equation}
For slow nodes, this simplifies as:
\begin{equation}
    \lim_{\iota \to \infty} \mi(\iota) \leq \frac{n}{2}\frac{\mu_s+\mu_f}{\mu_s}(\frac{2}{n}\cardgrad-\frac{1}{\frac{\mu_f}{\mu_s} -1}).
\end{equation}
In the following we consider $n=10$ clients, split in two clusters of same size: fast nodes with rate $\mu_f = 1.2$, and slow nodes with rates $\mu_s = 1$. We simulate up to $\nsteps=10^6$ server steps, and plot the distribution of the delays (in number of server steps). We saturate the network with $\cardgrad=1000$ tasks.
Hence we can estimate the following:
\begin{equation}
    \begin{cases}
        \lim_{\iota \to \infty} \mi(\iota) \leq \frac{n}{\frac{\mu_f}{\mu_s} -1} \simeq 5 n, \quad \forall i \in [1, n_f],\\
        \lim_{\iota \to \infty} \mi(\iota) \leq (\frac{2\cardgrad}{n} - \frac{1}{\frac{\mu_f}{\mu_s} -1})n \simeq 195 n, \quad \forall i \in [n_f+1, n].
    \end{cases}
\end{equation}
All delays bounds estimations have a closed form in the 2-cluster saturated regime: they only depend on the number of tasks in the network $\cardgrad$, on the number of nodes $n$, and on the intensity of nodes $\mu_f, \mu_s$.

\begin{figure}[h!]
    \centering
\includegraphics[scale=.8]{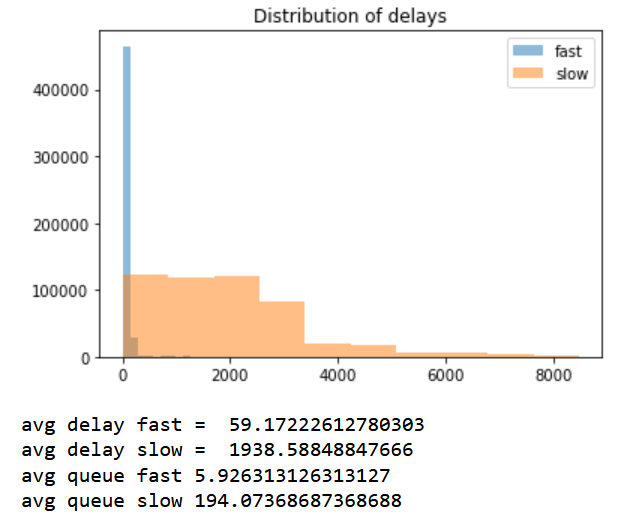}
    \caption{Histogram of fast and slow delays (in number of server steps) for a uniform sampling scheme.}
    \label{fig:app_simu_delays2clusters}
\end{figure}
Our numerical experiment in \Cref{fig:simu_delays2clusters} gives an average delay of $59 \sim 5n$ for fast nodes. The average delay for slow nodes reaches the value $1938 \simeq 195n$. And the queue lengths also correspond to the expected values. It is also important to point out that the average delays are way smaller than the maximum delay experienced in the $K=10^6$ steps. This further highlights the necessity to switch from analysis that depend on the $\tau_{max}$ quantity, to our analysis that only depends on the expected delays.
\subsection{Optimal sampling strategy under saturation}
In the previous paragraph we kept the sampling probability $p_i$ uniform.
The previous computation gives
\begin{align}
\frac{l^{1-\alpha}}{c_f} \Gamma(c_f \beta)&= \frac{1}{\gamma_f^l -1 }\\
&= \frac{1}{\frac{\theta_s}{\theta_f} -1 }\\
&= \frac{1}{\frac{\mu_fp_s}{\mu_sp_f} -1}.
\end{align}
Sticking to the previous assumptions, we want to minimize the quantity
\begin{multline}
G(\mathbf{p},\eta)= \frac{A}{\eta (\nsteps+1)} + \frac{\eta L B}{n}(\sum\nolimits_{i=1}^{n_f} \frac{1}{np}+\sum\nolimits_{i=n_f+1}^{n} \frac{1}{n(\frac{1}{n_-n_f}-p\frac{n_f}{n-n_f})^{-1}}) + \frac{\eta^2 L^2 B \cardgrad}{n}( \sum\nolimits_{i=1}^{n_f}  \frac{m_f}{n p^2} \\
+\sum\nolimits_{i=n_f+1}^n  \frac{m_s}{n (\frac{1}{n_-n_f}-p\frac{n_f}{n-n_f})^{-2}})\eqsp,
\end{multline}
This is equivalent to minimizing:
\begin{multline}
G(\mathbf{p},\eta)= \frac{A}{\eta (\nsteps+1)} + \frac{\eta L B}{n}(\sum\nolimits_{i=1}^{n_f} \frac{1}{np}+\sum\nolimits_{i=n_f+1}^{n} \frac{1}{n(\frac{1}{n_-n_f}-p\frac{n_f}{n-n_f})^{-1}}) + \frac{\eta^2 L^2 B \cardgrad}{n}( \sum\nolimits_{i=1}^{n_f}  \frac{\frac{n_f \mu_s + (n-n_f) \mu_s}{\mu_f}\frac{1}{\frac{\mu_fp_s}{\mu_sp_f} -1}}{n p^2} \\
+\sum\nolimits_{i=n_f+1}^n  \frac{\frac{n_f \mu_s + (n-n_f) \mu_s}{\mu_s}(\frac{\cardgrad}{n-n_f}-\frac{n_f}{n-n_f}\frac{1}{\frac{\mu_fp_s}{\mu_sp_f} -1})}{n (\frac{1}{n-n_f}-p\frac{n_f}{n-n_f})^{-2}})\eqsp,
\end{multline}
Hence we want to find $(\mathbf{p},\eta)$ that minimize the following:
\begin{multline}
G(\mathbf{p},\eta)= \frac{A}{\eta (\nsteps+1)} + \frac{\eta L B}{n}(\sum\nolimits_{i=1}^{n_f} \frac{1}{np}+\sum\nolimits_{i=n_f+1}^{n} \frac{1}{n(\frac{1}{n_-n_f}-p\frac{n_f}{n-n_f})^{-1}}) \\
+ \frac{\eta^2 L^2 B \cardgrad}{n}( \sum\nolimits_{i=1}^{n_f}  \frac{\frac{n_f \mu_s + (n-n_f) \mu_s}{\mu_f}\frac{1}{\frac{\mu_f}{\mu_s}(\frac{1}{p(n-n_f)}-\frac{n_f}{n-n_f}) -1}}{n p^2} \\
+\sum\nolimits_{i=n_f+1}^n  \frac{\frac{n_f \mu_s + (n-n_f) \mu_s}{\mu_s}(\frac{\cardgrad}{n-n_f}-\frac{n_f}{n-n_f}\frac{1}{\frac{\mu_f}{\mu_s}(\frac{1}{p(n-n_f)}-\frac{n_f}{n-n_f}) -1})}{n (\frac{1}{n-n_f}-p\frac{n_f}{n-n_f})^{-2}})\eqsp,
\end{multline}

The uniform sampling strategy ($p=\frac{1}{n}=0.1$) and higher probability values, give larger average delays.But very small sampling probabilities lead to a sharp increase in the delays. It is not easy to find a closed form formula of the optimal probability value $\mathbf{p}$. Our simulations suggest an optimal value of $p = 7.5 \cdot 10^{-3}$.

In \Cref{fig:app_histo_opt}, we have run the same simulations as in \Cref{fig:simu_delays2clusters}, except that we do not sample nodes uniformly at random. Instead, we sample fast nodes with a probability $p=7.5 \cdot 10^{-3}$, and slow nodes with a probability $\frac{2}{n}-p$. Our simulations shows the average delay is divided by $10$ and $2$, for fast and slow nodes respectively.
\begin{figure}[h!]
    \centering
    \includegraphics[scale=.8]{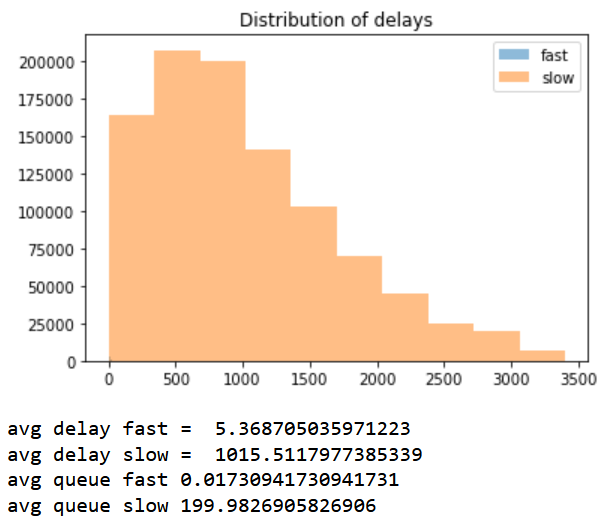}
    \caption{Histogram of fast and slow delays (in number of server steps) for an optimal sampling strategy.}
    \label{fig:app_histo_opt}
\end{figure}
\section{3 clusters scaling regime under saturation}
\label{sec:app:3clusters}
We consider three clusters of nodes of size $n_f$, $n_m-1-n_f$, and $n-1-n_m$, respectively. Nodes $\forall i \leq n_f$ are considered as fast, whereas nodes $\forall i > n_m$ are considered as slow (and will likely get more saturated).

The medium nodes ($\forall i \in [n_f+1, n_m]$) have an intermediate computational speed. We assume nodes from the same cluster have the same intensity $\mu_f, \mu_m, \mu_s$, for fast, medium, and slow nodes respectively. For practical reason we assume now that nodes $\forall i > n_m$ are the slowest ones, and has an intensity $\theta_s > \theta_j, \forall j <n$. This assumption is not restrictive due to the close nature of the network, and allows us to simplify the problem by splitting nodes into clusters.

This results in the \emph{scaled} intensities $\gamma_f(\iota), \gamma_m(\iota), \gamma_s(\iota)$,
where $\gamma_s(\iota)= 1 $, $\gamma_m(\iota) = 1+ c_m \iota^{\alpha-1}$, and $\gamma_f(\iota) = 1+ c_f\iota^{\delta-1}$; with $\alpha \leq 1$ and $\delta > 1$. The constant task constraint translates into the existence of $\beta$ such that $\beta \iota^{1-\alpha} = \cardgrad+1$. The particular choice of $\alpha \leq 1$ in \cite{van2021scaling} allows us to obtain traffic loads of nodes that tend to $1$ as $\iota \rightarrow \infty$, and we could directly apply Corollary 2 from \cite{van2021scaling}. But this setting is inconsistent with the practical Federated Learning framework we consider in this section: in practice most of fast clients have an empty queue. Hence, we stick to $\delta > 1$, and we can apply the results of \cite[Corollary.3,][]{van2021scaling}. This work gives a precise results on the queue length of saturated nodes, in the limit of high traffic loads. The queue length of the remaining queue are defined by the population size constraint. Note because the unnormalized queue length of fast nodes converges to a finite-mean random variable, there is no need to scale it.
\begin{proposition}[Corollary.3 in \cite{van2021scaling}]
  In stationary regime,  as $\iota \to \infty$, $X_{i}^\iota, \forall i \in [1, n_f]$ become degenerate with value $0$, and
  \begin{equation}
       c_m \iota^{\alpha-1} {X_{i}^\iota} \rightarrow_{d.}  \CPE{E_i}{\sum\nolimits_{j=n_f+1}^{n_m} \frac{E_j}{c_m} \leq \beta}, \forall i \in [n_f+1, n_m],
\end{equation}
with $E_i$ unit mean exponential distributions.
\end{proposition}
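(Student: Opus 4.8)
The plan is to treat this as a direct specialization of Corollary 3 of \cite{van2021scaling} to the three-cluster parametrization, so that most of the work reduces to verifying the scaling hypotheses of that corollary and then reading off the limit. I would start from the rescaled product-form stationary law $\pi(x_1,\dots,x_{n-1}) = \tilde H_C^{-1}\prod_{i=1}^{n-1}\gamma_i^{-x_i}$ inherited from \Cref{prop:Jack}, together with the three scaled intensities $\gamma_s(\iota)=1$, $\gamma_m(\iota)=1+c_m\iota^{\alpha-1}$, $\gamma_f(\iota)=1+c_f\iota^{\delta-1}$, and the population constraint $\beta\iota^{1-\alpha}=\cardgrad+1$.

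First I would analyze the load of each cluster as $\iota\to\infty$. Because $\alpha\le 1$ gives $\iota^{\alpha-1}\to 0$, the medium intensities satisfy $\gamma_m(\iota)\to 1$, so the medium and slow nodes saturate at the critical rate; this is exactly the regime in which \Cref{prop:cor2} yields a non-degenerate rescaled limit. By contrast, $\delta>1$ forces $\iota^{\delta-1}\to\infty$, hence $\gamma_f(\iota)\to\infty$, and every factor $\gamma_f(\iota)^{-x_i}$ with $x_i\ge 1$ is of order $\iota^{-(\delta-1)}$. The next step is to use this to show $\PP(X_i^\iota>0)\to 0$ for each fast node $i\le n_f$; since there are only finitely many fast nodes, a union bound gives that all fast queues are empty with probability tending to $1$, which is the asserted degeneracy $X_i^\iota\to 0$.

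I would then pass to the law conditioned on all fast queues being empty. On that event the stationary distribution restricted to the medium and slow nodes coincides, up to renormalization, with the two-cluster law of \Cref{prop:cor2}, now with the medium cluster playing the role of the ``fast'' cluster and the slow cluster serving as the saturated reference with $\gamma_s=1$. Applying \Cref{prop:cor2} to this medium-versus-slow subnetwork, with the same total scaled population $\beta$ (the vanishing fast population does not consume it), yields $c_m\iota^{\alpha-1}X_i^\iota \rightarrow_{d.} \CPE{E_i}{\sum\nolimits_{j=n_f+1}^{n_m} E_j/c_m \le \beta}$ for $i\in[n_f+1,n_m]$, which is exactly the claim; no scaling of the fast-node queue lengths is required since their unnormalized limit is already degenerate.

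The hard part will be controlling the normalization constant $\tilde H_C$ in this multiscale limit: one must show that the partition function is asymptotically dominated by configurations in which the fast nodes are empty, so that the fast sector factors out and the medium/slow marginal behaves precisely as in the two-cluster case. This is a separation-of-scales (saddle-point / large-deviations) estimate on $H_{\cardgrad}$, and it is exactly this bookkeeping that Corollary 3 of \cite{van2021scaling} is built to supply; once its hypotheses are matched to the intensities above, the remaining steps are routine.
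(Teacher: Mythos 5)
Your proposal takes essentially the same route as the paper: the paper offers no independent proof of this statement and simply invokes Corollary 3 of \cite{van2021scaling} after checking that $\delta>1$ sends the fast-node loads away from saturation while the medium and slow clusters saturate under the population constraint $\beta\iota^{1-\alpha}=\cardgrad+1$. Your additional sketch (fast queues empty because $\gamma_f(\iota)\to\infty$, then the medium/slow marginal reducing to the two-cluster case) is a reasonable account of why that corollary applies, and you correctly identify that the partition-function control is exactly what the cited result supplies.
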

As a consequence, using as before dominated convergence we can estimate the following expected value (expected stationary queue lengths of fast, medium, and slow nodes respectively):
\begin{equation}
    \begin{cases}
        \lim_{\iota \to \infty} \PE[X^\iota_i] = 0 , \quad \forall i \in [1, n_f], \\
        \lim_{\iota \to \infty} \iota^{\alpha-1} \PE[X^\iota_i] = \frac{1}{c_m} \Gamma(c_m \beta), \quad \forall i \in [n_f+1,n_m],\\
        \lim_{\iota \to \infty} \iota^{\alpha-1} \PE[X^\iota_i] = \frac{1}{n-n_m} \left (\beta - (n_m-n_f)\frac{1}{c_m} \Gamma(c_m \beta) \right ), \quad \forall i \in [n_m+1,n].
    \end{cases}
\end{equation}
Hence, we can estimate the number of server steps when a task arrive and quits some node $i$ as :
\begin{equation}
\lim_{\iota \to \infty} \iota^{\alpha-1} \mi(\iota) \leq  \lim_{\iota \to \infty}{\lambda \over \mu_i}(\iota^{\alpha-1}\PE[X_{i}^\iota]+1),
\end{equation}
where $\lambda=n_f \mathbb{P}(X_f>0)\mu_f + (n_m-n_f) \mu_m + (n-n_m) \mu_s$ (because fast nodes have \emph{almost} empty queue $X_f$ in the considered stationary setting).

We will further assume $n_f=\frac{n}{3}$, $n_m=\frac{2n}{3}$, and $p=\frac{1}{n}$. Under these conditions, we have $\Gamma(c_m \beta) = \Gamma(\cardgrad(\frac{\mu_m}{\mu_s}-1)) \simeq 1$.
For fast nodes, the delay can be simplified as:
\begin{equation}
    \lim_{\iota \to \infty} \mi(\iota) \leq \frac{\lambda}{\mu_f}.
\end{equation}
For medium nodes, the delay can be simplified as:
\begin{equation}
    \lim_{\iota \to \infty} \mi(\iota) \leq \frac{\lambda}{\mu_m}\frac{1}{\frac{\mu_m}{\mu_s} -1}.
\end{equation}
For slow nodes, this simplifies as:
\begin{equation}
    \lim_{\iota \to \infty} \mi(\iota) \leq \frac{\lambda}{\mu_s}(\frac{3}{n}\cardgrad-\frac{1}{\frac{\mu_m}{\mu_s} -1}).
\end{equation}
In the following we consider $n=9$ clients, split in three clusters of same size: fast nodes with rate $\mu_f=10$, medium nodes with rate $\mu_m=1.2$, and slow nodes with rate $\mu_s=1$. We simulate up to $\nsteps=10^6$ server steps, and plot the distribution of the delays (in number of server steps). Under this setting, we have $\frac{\iota^{1-\alpha}}{c_m} \Gamma(c_m \beta) = \frac{1}{\frac{\mu_m}{\mu_s} -1} = 5$. Hence we can estimate the following:
\begin{equation}
    \begin{cases}
        \lim_{\iota \to \infty} \mi(\iota) \leq \frac{n_f \mathbb{P}(X_f>0)\mu_f + (n_m-n_f) \mu_m + (n-n_m) \mu_s}{\mu_f} \simeq 3\mathbb{P}(X_f>0), \quad \forall i \in [1, n_f],\\
        \lim_{\iota \to \infty} \mi(\iota) \leq \frac{n_f \mathbb{P}(X_f>0)\mu_f + (n_m-n_f) \mu_m + (n-n_m) \mu_s}{\mu_m}\frac{1}{\frac{\mu_m}{\mu_s} -1}, \quad \forall i \in [n_f+1, n_m],\\
        \lim_{\iota \to \infty} \mi(\iota) \leq \frac{n_f \mathbb{P}(X_f>0)\mu_f + (n_m-n_f) \mu_m + (n-n_m) \mu_s}{\mu_s} (\frac{3\cardgrad}{n} - \frac{1}{\frac{\mu_m}{\mu_s} -1}), \quad \forall i \in [n_m+1, n].
    \end{cases}
\end{equation}
\begin{figure}[h!]
    \centering
\includegraphics[scale=.8]{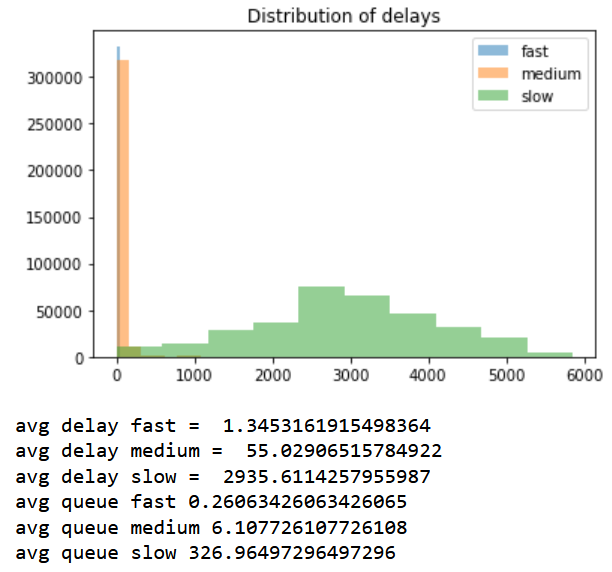}
    \caption{We assign $\cardgrad=1000$ tasks to a network of $n=9$ nodes split in 3 clusters.}
    \label{fig:simu_delays}
\end{figure}
The simulation gives $\lambda \simeq 9$, and we recover the theoretical delays: the average delay for fast nodes is close to $1$, the average delay for medium node is $55 \simeq 5 \frac{\lambda}{\mu_m}$, and the average delay for slow nodes is about $2935 \simeq 325 \frac{\lambda}{\mu_s}$.

\section{Deep learning experiments details}
\label{sec:app:deep_exp}
\begin{table}[]
    \centering
    \begin{tabular}{c|c|c|c}
        Method  & FedBuff & AsyncSGD & \Algo\ \\
        \hline
        Accuracy on the CS test set & $49.89 \pm 0.77$ & $59.09 \pm 1.97$  & $66.61 \pm 3.26$
    \end{tabular}
    \caption{Performance average (mean $\pm$ std) over $10$ random seeds for the CIFAR-10 task.}
    \label{tab:avg_accuracy}
\end{table}
\subsection{Simulation}
\label{sec:app:simuruntime}
We based our simulations mainly on the code developed by \cite{nguyen2022federated}: we assume a server and $n$ clients, each of which initially has a unique split of the training dataset. To adequately capture the time spent on the server side for computations and orchestration of centralized learning, two quantities are implemented: the server waiting time (the time the server waits between two consecutive calls ) and the server interaction time (the time the server takes to send and receive the required data). In all experiments, they are set to $4$ and $3$, respectively. When a client $i$ receives a new task, we take a new sample from an exponential distribution (with mean $\frac{1}{\mu_i}$, where $\mu_i$ is the rate of node $i$), and stack the gradient computation on top of the client queue.
\subsection{Implementation}
\label{sec:app:concu_works}
In \Cref{sec:experiments} we have simulated experiments and run the code for the concurrent approaches \texttt{AsyncSGD} and \texttt{FedBuff}. We also propose an implementation of \texttt{FedAvg} in the supplemental material. \texttt{FedAvg} is a standard synchronous method. At the beginning of each round, the central node $s$ selects clients uniformly at random and broadcast its current model. Each of these clients take the central server value and then performs exactly $K$ local steps, and then sends the resulting model progress back to the server. The server then computes the average of the $s$ received models and updates its model. In this synchronous structure, the server must wait in each round for the slowest client to complete its update.

\texttt{AsyncSGD} is an asynchronous method that initially randomly selects $\cardgrad$ clients. Then, a server step is done when a new task is completed and sent back to the server. The server uniformly selects a new client and send a new task. While \texttt{AsyncSGD} was tested on a simple task in \cite{koloskova2022sharper}, we have developed a deep learning version of the algorithm (see supplemental material) based on a list of dictionaries (to simulate a network of waiting queues).

For each global step, in \texttt{FedBuff}, the runtime is the sum of the server interaction time and the time spent feeding the buffer of size $Z$. The waiting time for feeding the buffer depends on the respective local runtimes of the slow and fast clients, as well as on the ratio between slow and fast clients: in the code, we reset a counter at the beginning of each global step and read the runtime when the $Z^{th}$ local update arrives. In \texttt{AsyncSGD} and \Algo, the runtime is defined by the closed Jackson network properties. \end{document}